\def\BibTeX{{\rm B\kern-.05em{\sc i\kern-.025em b}\kern-.08em
		T\kern-.1667em\lower.7ex\hbox{E}\kern-.125emX}}
\newtheorem{theorem}{Theorem}
\algnewcommand{\LineComment}[1]{\Statex \hfill \(\triangleright\) #1}
\definecolor{myred}{RGB}{255,189,200}
\definecolor{myblue}{RGB}{200,228,250}
\pgfmathsetmacro{\nodebasesize}{1} 
\pgfmathsetmacro{\nodeinnersep}{0.05}
\setlist{leftmargin=4.5mm}
\begin{document}

\title{HiKonv: Maximizing the Throughput of Quantized Convolution With Novel Bit-wise Management and Computation}


\author{Yao Chen,
        Junhao Pan,
        Xinheng Liu,
        Jinjun Xiong,~\IEEEmembership{Fellow,~IEEE},
        and~Deming Chen,~\IEEEmembership{Fellow,~IEEE}%
\IEEEcompsocitemizethanks{
\IEEEcompsocthanksitem Yao Chen is with the Department of Computer Science, National University of Singapore, Singapore (email: yaochen@nus.edu.sg).
\IEEEcompsocthanksitem Junhao Pan and Xinheng Liu are with the Department of Electrical and Computer Engineering, University of Illinois at Urbana-Champaign, Urbana, IL, USA (email: \{jpan22, xliu79\}@illinois.edu).
\IEEEcompsocthanksitem Jinjun Xiong is with the School of Engineering and Applied Sciences, University at Buffalo, Buffalo, NY, USA (email: jinjun@buffalo.edu).
\IEEEcompsocthanksitem Deming Chen is with the Department of Electrical and Computer Engineering, University of Illinois at Urbana-Champaign, Urbana, IL, USA (email: dchen@illinois.edu).}
\thanks{Manuscript received xxx xxx, xxx.}
}



\maketitle

\begin{abstract}
Quantization is proven to be effective for Convolutional Neural Networks (CNN) to reduce the cost of computation and storage with low-bitwidth data representations. 
However, the current execution of quantized data on the existing full-bitwidth processing units, such as ALU in CPUs and DSP in FPGAs, is through simply extending the lower bitwidth to the supported bitwidth, which leads to the underutilization of the computing unit and delivers low computational throughput.
In this study, we propose HiKonv, a unified solution that maximizes the throughput of convolution on a given underlying processing unit with low-bitwidth quantized data as inputs through novel bit-wise management and parallel computation.
We establish theoretical framework and performance models using a full-bitwidth multiplier for highly parallelized low-bitwidth convolution and demonstrate new breakthroughs for high-performance computing in this critical domain. For example, a single 32-bit processing unit in CPU can deliver 128 binarized convolution operations (multiplications and additions), thirteen 4-bit convolution operations or five 8-bit convolution operations with a single multiplication instruction, and a single 27$\times$18 multiplier in the FPGA DSP core can deliver 60, 8 or 2 convolution operations with 1, 4 or 8-bit inputs in one clock cycle.
We demonstrate the effectiveness of HiKonv on CPU and FPGA for both convolutional layers and a complete DNN model with our platform-oriented implementations.
On CPU, HiKonv outperforms the baseline implementation with $1$ to $8$-bit inputs and provides up to $7.6\times$ and $1.4\times$ performance improvements for 1-D convolution.
For the 2-D convolutional layer, HiKonv performs $2.74\times$ and $3.19\times$ over the baseline implementation for 4-bit signed and unsigned data inputs.
HiKonv also provides over $2\times$ latency improvement for a complete DNN model on both Intel and ARM CPUs.
On FPGA, the HiKonv solution enables a single DSP to process the same convolution operations that require multiple DSPs in the conventional convolution method with a shorter processing latency.
For binarized input, each DSP with HiKonv is equivalent up to 76.6 LUTs.
Compared to the DAC-SDC 2020 champion model for FPGA, HiKonv achieves a 2.37$\times$ throughput improvement and 2.61$\times$ DSP efficiency improvement, respectively.
\end{abstract}

\begin{IEEEkeywords}
Quantization, convolution neural network, high throughput, bit-wise management, FPGA, DSP, multiplier.
\end{IEEEkeywords}

\section{Introduction}\label{sec:intro}

Quantization is typically done by approximating high-precision floating point values to low-bitwidth integers or fixed-point representations.
It is commonly used in the deployment of Deep Neural Network (DNN) models to reduce the cost (i.e., memory consumption) and improve performance (i.e., execution time)~\cite{gholami2021survey, ul2q, vecq, skynet, DNNBuilder, cong2019dac}.
This is particularly important for modern DNN models, as many of them employ convolutional layers, which contain intensive multiplication and accumulation (MAC)
operations~\cite{chen2019clouddnn, DNNBuilder, cong2019dac,tdla, csrnet}.
Therefore, many novel quantization methods have been proposed in the literature to reduce the precision of weights, activations, and even gradients to low-bitwidth integers or fixed-point types for DNNs while retaining high model accuracy~\cite{ul2q, vecq, cong2019dac, zhou2016dorefa, tdla}.

Despite the success of quantization for DNNs, the current deployment of quantized convolution to the computing platform is not ideal.
To take advantage of the lower bitwidth of data from quantization, existing solutions tend to design new computing architectures that are specific to quantized convolutions~\cite{sharma2018bit,ryu2019bitblade,shin2018dnpu,lee2018unpu,sharify2018loom,pirdsp}, which lose the generality and take long application-specific integrated circuit (ASIC) design cycles.
While for computing platforms with given fixed arithmetic units, there is no general support for quantized convolution operation. Most general processing units such as X86\_64 CPUs and ARM processors have a high-bitwidth (such as 32 or 64 bits) multiplication-and-accumulation (MAC) unit for either floating-point numbers or integers~\cite{mliot, AVXforquant}.
When they are used for quantized MACs, most of the bitwidths are left underutilized, wasting precious computing resources~\cite{xilinxint4, xilinxint8, AVXforquant, bertTacl}.
Even with the 8-bit multi-data processing of the Advanced Vector Extensions (AVX) support in X86\_64 architecture,
processing a single 4-bit multiplication would still occupy the 8-bit data channel with
the remaining 4 bits simply wasted~\cite{AVXforquant}.
Waste becomes even more severe when either processing lower bitwidth (such as binary and ternary) data or utilizing a hardware unit with higher built-in bitwidths support.
Reconfigurable hardware, such as FPGA, can alleviate some of the waste because of its bit-level flexibility for hardware configuration, but it exhibits similar drawbacks, especially when adopting the high-precision digital signal processing (DSP) units in the FPGAs~\cite{mliot, tdla, chen2016platform, realtime, fcudanoc, ChenPlatformchoice}.
Without careful bit-wise management of inputs and outputs, deploying quantized DNNs on FPGAs with the given DSPs still wastes a lot of their computation capacity~\cite{xilinxint8, xilinxint4, chen2019clouddnn, fcudanoc}.

The lack of exploration of the relation of the low-level computing pattern of existing MAC or multiplication units to convolution constrains further improvement of the processing performance of the given arithmetic units for quantized convolution. 
In this paper, we propose a novel solution, HiKonv, that maximizes the efficiency of the existing multiplication units when conducting convolution operations with quantized inputs, thus improving the throughput of quantized convolution on the arithmetic unit and reducing the end-to-end DNN computation latency.
The major contributions of our HiKonv solution are as follows:
\begin{itemize}
\item We first conduct theoretical exploration to show that HiKonv's bit-wise management is universal, and it adopts a single high-bitwidth multiplier for multiple quantized convolutions in a single multiplication operation. This technique can be applied to arbitrarily quantized bitwidths with a given high-bitwidth multiplier unit.
\item We then extend it to support arbitrary lengths of 1-D and 2-D convolutions with the corresponding bit management and computation.
\item We build the theoretical analysis model that is used to explore the optimal design configurations with the given quantization bitwidths and the arithmetic unit. 
\item We provide two sets of implementations of HiKonv on general purpose processor and FPGA DSP, respectively, with different design considerations.
\item Our experimental results further validate the general applicability of HiKonv, the effectiveness of the analytical model, and the performance improvement of our implementations. For example, our CPU-based implementation of HiKonv achieves up to $2.4\times$ and $2.03\times$ latency improvement for a 4-bit quantized DNN model on X86\_64 CPU and ARM processor, respectively. The latency of our FPGA-based implementation of the DNN model outperforms the state-of-the-art implementation by $2.37\times$.

\end{itemize}

Because of its generality, we believe that HiKonv opens up a new venue for further improving the hardware efficiency of DNN-based inferences. It
not only improves the throughput and latency for existing quantized DNN models on existing hardware, but also offers new opportunities
for designing new hardware-friendly quantized DNN models or co-designing both the hardware and quantized DNN models~\cite{edd, DNNBuilder, Hao2019FPGADNNCA, codesignoandc}.

The rest of the paper is organized as follows. Section~\ref{sec:backnrelate} surveys existing solutions for the processing of quantized convolutions. Section~\ref{sec:prelim} introduces the preliminary information of our HiKonv solution, while Section~\ref{sec:formula} presents the theoretical deduction of the HiKonv solution together with the detailed throughput analysis for optimal design configuration. Section~\ref{sec:impl} shows the implementations of HiKonv on general-purpose processors and FPGAs with different design considerations. Section~\ref{sec:eva} presents the evaluation results. Finally, Section~\ref{sec:conclusion} concludes this paper.

\section{Related Works}\label{sec:backnrelate}

\begin{figure*}[]
    \centering
    \includegraphics[width=0.7\textwidth]{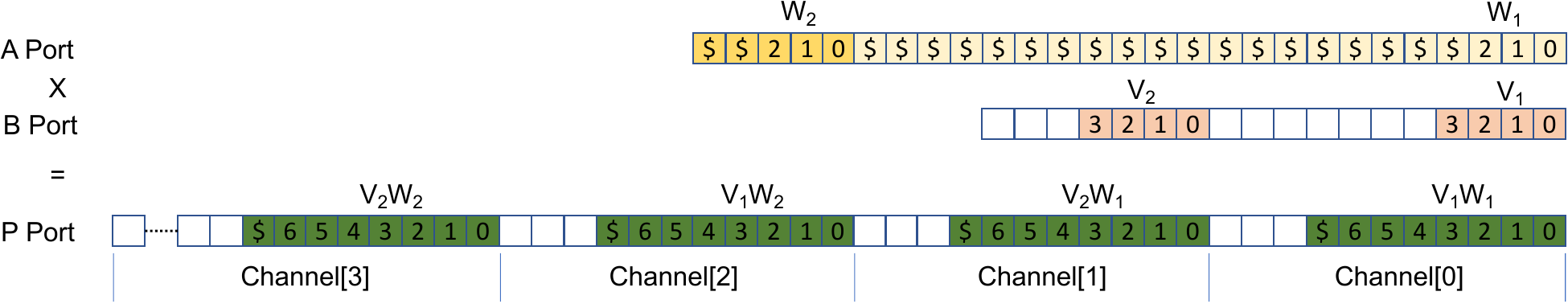}
    \caption{INT4 optimization on DSP48E2~\cite{xilinxint4}.}
    \label{fig:xilinxint4}
\end{figure*}

The existing solutions for quantized convolution processing rely on the quantized multiply and accumulation support of the underlying processing units.
These solutions can be classified into two categories: 1) solutions with dedicated hardware architecture for quantized/low-precision arithmetic and 2) solutions that rely on optimized software libraries to utilize the existing processing units more efficiently.
Note that although modern General Purpose Processors (GPP) equipped with Single Instruction Multiple Data (SIMD) floating point processing units can also support certain low-bitwidth data processing, they belong to the software library category because the processing units can also be used for other general arithmetic operations, i.e., AVX2 in X86\_64 CPUs and SSE in ARM processors.
To support generality, in this work, we will only focus on the main computing structure, such as the Arithmetic and Logic unit (ALU) or the multiplier in the CPU or FPGA DSP, instead of dedicated hardware architectures.

\subsection{Optimized software library} 

There are methods that pack/insert shorter bitwidth inputs into longer words and attempt to use the existing high-bitwidth computation units to improve the processing efficiency for quantized inputs~\cite{AVXforquant, xilinxint4, ottavi2020mixed,garofalo2020pulp,lai2018cmsis, xilinxint8}.
These solutions are designed as software libraries to fully utilize the existing hardware units such as ALU and Floating Point Unit (FPU) in the X86\_64 CPUs or ARM CPUs, etc. 
Here we consider the methods that do not involve the change of the existing hardware.

A library called \textit{Clover}~\cite{AVXforquant} adopts the AVX support of X86\_64 architecture and further extends it with an optimized 4-bit data container that reduces the memory access overhead. It simply adopts the 8-bit channel supported by the AVX architecture to process 4-bit data, with an online 4-bit to 8-bit extension. Although the performance of the system is improved, the speed-up is constrained by the number of supported channels by the AVX architecture.

CMSIS-NN~\cite{lai2018cmsis} is a set of software kernels for the execution of neural networks (NNs) on the ARM cortex-M CPUs, where the 16-bit single instruction multiple data (SIMD) MAC unit is used to process the low-bitwidth input data with a low-bitwidth to 16-bit conversion, i.e., 8-bit to 16-bit conversion. As the underlying MAC units only support 16-bit data inputs, all data with bitwidth lower than 16 is converted to 16-bit. Similar to \textit{Clover}, the performance gain comes from efficient bitwidth conversion and full utilization of the 16-bit MAC channels.

PULP-NN\cite{garofalo2020pulp} is a software library specifically designed for GAP-8\cite{GAP8} processors. Each such processor contains 8 RISC-V cores with special DSP extensions. The execution of PULP-NN requires unique bit-wise operations, i.e., low-bitwidth data unpacking, data quantization instructions, etc., that are supported by the processor. It supports sub-word size data to reduce the memory access cost and fully utilizes the SIMD instructions to process data in parallel. It shows great efficiency due to the special instruction set, which also limits its flexibility.

\subsection{Leveraging Reconfigurable Hardware}

XILINX INT4 and INT8~{\cite{xilinxint4, xilinxint8}} support specifically for 4-bit and 8-bit processing on Xilinx FPGAs. 
They pack multiple inputs to enable multi-channel multiplication
and can be easily extended to process lower bitwidths than 4 and 8 by bitwidth extension. Particularly, both methods adopt existing $28\times17$ DSP units and pack the low-bitwidth data to the input ports to process two or four multiplication operations simultaneously, which improves the processing efficiency of the DSP.  Although these solutions require the data to be either packed before sending to the DSP or use dedicated logic paths to pack the data during runtime, 
they use existing DSP units, and the solutions can be simply wrapped as High Level Synthesis (HLS) libraries.

\subsection{Motivation}

Developing dedicated hardware solutions requires long design cycles and also loses the flexibility to handle changing data types if the accelerator's data path is fixed to a specific bitwidth.
Software solutions on existing structures improve processing efficiency for quantized operations only by fully utilizing memory bandwidth or increasing the concurrency by a small factor, e.g., 2 or 4~\cite{xilinxint8, xilinxint4}.
\textbf{Furthermore, none of the existing solutions takes into account the internal relation between multiplication and convolution and the potential to further improve the throughput of quantized convolutions}.
There are no theoretical studies to guide the flexible management of low-bitwidth convolution computations. Our work, HiKonv, fills these existing gaps and provides theoretical guidance for the best computational efficiency and throughput on either existing hardware architecture or on new bit-efficient processing units for flexible low-bitwidth convolution computation.
\section{Preliminary}\label{sec:prelim}

Our proposed HiKonv solution is inspired by the input packing for multi-channel multiplication on FPGA DSP~\cite{xilinxint4, xilinxint8}.
It adopts similar definitions as the existing input packing methods with proper extensions and revisions. In this section, we first look into the existing input packing method and then revisit the detailed operational pattern of a 1-D convolution.

\subsection{Single Multiplier Multi-Channel Multiplication}

Compared to simply expanding the low-bitwidth data to a higher bitwidth alternative, typically 16 or 32 bits, the multi-channel multiplication is
a representative method used by current hardware units to process multiple low-bitwidth inputs concurrently~\cite{xilinxint4, xilinxint8}.
Particularly, multiple low-bitwidth values are packed together to form larger bitwidth multiplicands.
By taking advantage of the shift-addition operation of the intermediate results during the process,
\textcolor{black}{the output of the high-bitwidth multiplications, $Prod$, can be separated into multiple individual channels holding results of low-bitwidth multiplications}.
The most representative example of multi-channel multiplication is the INT4 optimization for the Xilinx DSP48E2 unit, as shown in Figure~\ref{fig:xilinxint4}. 
\textcolor{black}{Each of the inputs contains two 4-bit integer values  and form the inputs to the input ports $A$ and $B$}.

The separation of the four different output channel in Figure~\ref{fig:xilinxint4}
provides the opportunity to simply segment out the required multiplication results $Prod$ from the output port $P$.
To avoid the overlapping of the results, some bits are reserved or filled with values during the packing of the low-bitwidth data.
Note here, the \{$V$\} values are all unsigned integers (UINT4), and \{${W}$\} values are signed integers (INT4).
An INT4$\times$UINT4 multiplication generates a result with at least an 8-bit space.
Here, the existing methods define the term \textit{guard bit} as the filling 1s or 0s between the output results for the purpose of preventing the overflow of the multiplication.
In this example, the value of \textit{guard bit} is 3.
The multiplication with four low-bitwidth inputs for this example is represented as:
\textcolor{black}{
\begin{equation}
\label{equ:xilinx4bit}
\small
\begin{split}
    Prod &= A \times B \\
    &= (V_2 \cdot 2^{11} + V_1) \cdot (W_2 \cdot 2^{22} + W_1) \\
    &= V_2W_2\cdot 2^{33} + V_1W_2\cdot2^{22} + V_2W_1\cdot2^{11} + V_1W_1
\end{split}    
\end{equation}
}The output of Equation~\ref{equ:xilinx4bit} is the concatenation of four multiplication results with zeros between them due to the guard bits. This process accomplishes four channels of multiplication in one operation cycle with a large bitwidth multiplier.

\subsection{1-D Convolution}
Denote $F_{N,K}(f,g)$ as
the conventional 1-D discrete convolution between an $N$-element sequence $f$ and a $K$-element kernel $g$. Here, we define the infinite-length sequence $h$ as the extension of $f$ with zero values to the index range of $(-\infty,\infty)$, as shown in Equation~\ref{eq:pad_f0}. 
The 1-D convolution is represented in Equation~\ref{eq:pad_f}.
\textcolor{black}{Meanwhile, $y'$ is the output with $N\text{+}K\text{-}1$ non-zero elements}.

\vspace{-4mm}
\begin{small}
\begin{gather}
\label{eq:pad_f0}
    h[n]=\begin{cases} 
        f[n] &,  0\leq n < N\\
        0 &, n <0 \text{ or } n\geq N
    \end{cases}\\
    y'[m]= (h\ast g)[m]=\sum_{k=0}^{K-1}h[m-k]g[k] \label{eq:pad_f}
\end{gather}
\end{small}
\textcolor{black}{Alternatively, the non-zero sequence in $y'$ can be represented with Equation~\ref{eq:ym} as the sum of an $(N\text{+}K\text{-}1)$-element sequence. Each of the $y[m]$ involves a sequence of multiplication and addition operations of continuous elements in $h$ and $g$, and we denote it as \textit{partial convolution}.}
\begin{equation}
\small
\label{eq:ym}
y[m]=\sum_{k+n=m} h[n]g[k]
\end{equation}
\section{Multiplier for Convolution}\label{sec:formula}

\begin{figure*}
    \centering
    \includegraphics[width=\textwidth]{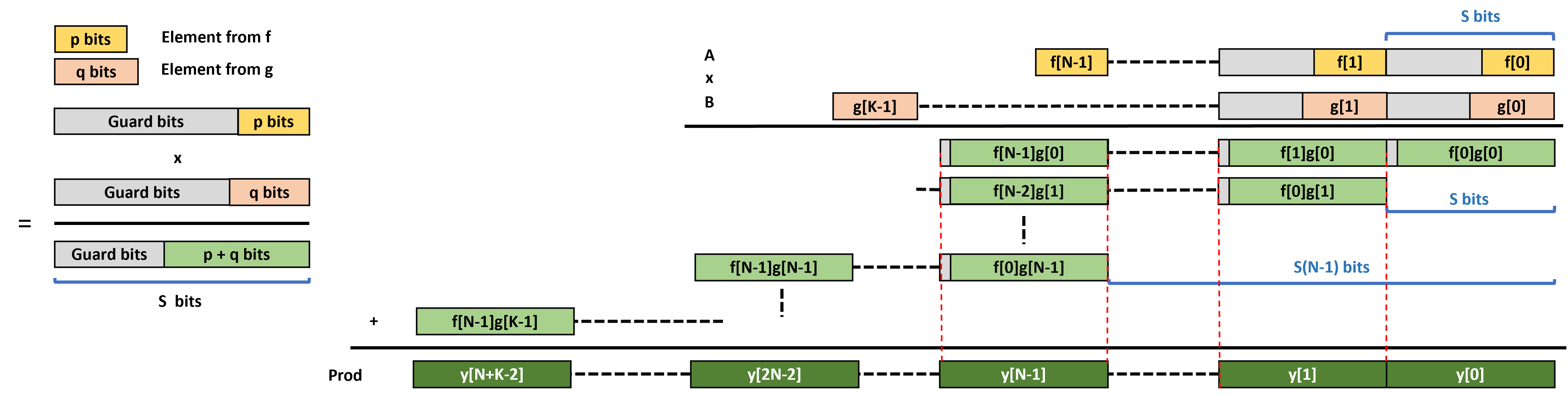}
    \caption{Binary view of the ideal process of $Prod = A\times B$.}
    \label{fig:amulb}
\end{figure*}

There are two observations from the single multiplier multi-channel multiplication and 1-D convolution:
\begin{itemize}
    \item First, multi-channel multiplication creates a set of multiplications and accumulations of low-bitwidth inputs within the operation of a single large bitwidth multiplication.
    \item Second, 1-D convolution is constructed with multiplications and accumulations of a set of contiguous data from two different data sequences.
\end{itemize}
Inspired by these two observations, we further extend multi-channel multiplication 
with a novel bit-wise management and generalize the solution for using a given hardware unit to process the maximum amount of low-bitwidth convolution operations concurrently. Furthermore, we provide theoretical guarantees for our solution.

First, we define the variables related to our exploration.
As shown in Figure~\ref{fig:amulb}, we assume a given high-precision hardware unit that can multiply $L_A$-bit integer input $A$ with $L_B$-bit integer input $B$ and generate the product $Prod$. The bitwidths of $A$ and $B$ define the computation capability of the hardware unit, or more specifically, the multiplier. Convolution input $f$ and kernel $g$ are the two sequences of low-bitwidth integer values quantized to $p$ and $q$ bits, respectively. 
Note here, Figure~\ref{fig:amulb} is for the case where all elements of the sequences $f$ and $g$ are unsigned integers to ease the presentation of the HiKonv solution, and later we will show the generality of HiKonv for both signed and unsigned values as inputs.

To determine how to load $A$ and $B$ with multiple convolution operands from $f$ and $g$ and perform the convolution between these operands, we first define the concept of a slice, which is a certain length of consecutive bits in the input to hold the low-bitwidth input data, $S$ is the size of a slice. Both input $A$ and $B$ are segmented with slices, as demonstrated on the left in Figure~\ref{fig:amulb}.
The lower bits of each slice contains one element from $f$ or $g$.
To simplify the problem, we assume that $N$ and $K$ are the maximum numbers of elements from $f$ and $g$ that fit into $A$ and $B$, respectively.
Hence, the polynomial representations of $A$ and $B$ are:
\begin{small}
\begin{equation}\label{eq:inpack}
\begin{split}
    & A=\sum_{n=0}^{N-1} f[n]\cdot 2^{ S\cdot n } \\
    & B=\sum_{k=0}^{K-1} g[k]\cdot 2^{ S\cdot k }
\end{split}
\end{equation}
\end{small}

Although the intermediate results of the multiplication are invisible to us, we assume that the processing unit takes the most ideal way of shift-add operations for the multiplication of two inputs, as shown in Figure~\ref{fig:amulb}. 
The entire multiplication is treated as the multiplication of slices in $A$ with the slices in $B$ followed by shifting the product left by $S$ bits and accumulating the shifted result to the previous result.
There are always $N\times K$ products between elements from $f$ and $g$ that are calculated and accumulated to form the output.

\subsection{From Multiplication To Convolution}
\label{subsect:mul_conv}

To use the effective results from the product $Prod =A \times B$, we need to guarantee that the expected convolution for the different low-bitwidth inputs can be easily segmented out.
In order to segment the intermediate results, we extend the guard bits $G_b$'s definition that is introduced in Section~\ref{sec:prelim}~\cite{xilinxint4}.
It is introduced in the output to prevent the overflow of the different multiplication results.
In our proposed solution, the guard bits are not only to prevent overlaps between the effective product of two adjacent intermediate partial products but also to support the segmentation of the partial accumulations of vertically stacked partial multiplication results.
Its length varies according to the maximum number of multiplication terms $f[n]\cdot g[k]$ that are summed together. 
For a single multiplier, with $A$ and $B$ as inputs, a maximum of $min(K,N)$ terms are summed together for each output segment. 
Therefore, with our newly defined slice, to ensure the correctness of the final results, each slice should be capable of holding both the guard bits and the bits of the production from $p$-bit and $q$-bit inputs, respectively.

\begin{theorem}\label{lemma:seg}
Assuming a multiplier, with given $A$ and $B$ input multiplicands constructed from the $N$-element sequence $f$ and the $K$-element sequence $g$, where $f$ and $g$ are quantized to $p$ and $q$ bits, respectively, with the guard bits $G_b$, we can obtain $N+K-1$ segments from the product $Prod =A\times B$ which are all short partial convolutions in the form of 1-D convolution.
\end{theorem}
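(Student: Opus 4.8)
The plan is to evaluate the product $Prod = A\times B$ directly from the polynomial forms in Equation~\ref{eq:inpack} and to show that, after regrouping by the common power of two, its coefficients coincide with the 1-D convolution outputs of Equation~\ref{eq:ym}. First I would substitute the two expansions and distribute, obtaining the double sum
\begin{equation}
\small
Prod = \sum_{n=0}^{N-1}\sum_{k=0}^{K-1} f[n]\,g[k]\cdot 2^{S(n+k)}.
\end{equation}
The decisive feature is that the exponent depends on the index pair only through $m=n+k$, so the next step is to reindex the double sum by $m$, collecting all pairs with $n+k=m$:
\begin{equation}
\small
Prod = \sum_{m=0}^{N+K-2}\Bigl(\sum_{n+k=m} f[n]\,g[k]\Bigr)\cdot 2^{S\cdot m}.
\end{equation}
By Equation~\ref{eq:ym} the inner sum is exactly $y[m]$, the $m$-th output of $F_{N,K}(f,g)$, and since $m$ runs over $0,1,\ldots,N+K-2$ there are precisely $N+K-1$ such coefficients, matching the claimed number of segments.

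The remaining work is to argue that each coefficient $y[m]$ can be read off the product as an independent $S$-bit field rather than being corrupted by carries from its neighbors. Here I would invoke the guard-bit construction: within each slice of width $S$, the low $p+q$ bits suffice to hold a single product $f[n]g[k]$ while the top $G_b$ bits are reserved as guard space. Because at most $\min(N,K)$ product terms accumulate into any single $y[m]$, its magnitude is bounded by $\min(N,K)\cdot(2^p-1)(2^q-1)$, hence it occupies at most $p+q+\lceil\log_2\min(N,K)\rceil$ bits. The key step is then to verify that choosing $S\ge p+q+G_b$ with $G_b\ge\lceil\log_2\min(N,K)\rceil$ keeps every $y[m]$ strictly below $2^{S}$, so that no accumulation spills into the window of $y[m+1]$; under this condition the base-$2^{S}$ digits of $Prod$ are exactly the $y[m]$, and the segmentation is lossless.

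I expect the guard-bit sufficiency to be the main obstacle, since that is the only place where correctness, as opposed to pure algebraic rearrangement, is genuinely at stake. The identity $Prod=\sum_m y[m]\,2^{Sm}$ holds unconditionally, but it yields a clean positional decomposition only when each $y[m]$ is nonnegative and strictly smaller than $2^{S}$; pinning down the exact bound on $G_b$ that enforces this, and afterwards extending the argument to the signed case where some $y[m]$ are negative and their sign must be propagated consistently across adjacent segments, is the delicate part that the proof must handle with care.
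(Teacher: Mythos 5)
Your proposal is correct and follows essentially the same route as the paper's proof: expand $Prod$ from the polynomial packings of Equation~\ref{eq:inpack}, regroup the double sum by $m=n+k$ to recognize each coefficient as $y[m]$ from Equation~\ref{eq:ym}, and invoke the guard-bit bound $G_b\geq\lceil\log_2\min(K,N)\rceil$ (which the paper states immediately after its proof, with your explicit magnitude bound $\min(N,K)(2^p-1)(2^q-1)<2^S$ making the overflow argument slightly more self-contained) to guarantee the $N+K-1$ segments separate cleanly, deferring the signed case just as the paper does. The only detail you miss is the paper's tighter slice width for binary inputs ($S=q+G_b$ when $p=1$, per Equation~\ref{equ:slice1}), but your uniform choice $S\geq p+q+G_b$ remains a valid sufficient condition.
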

\begin{proof} Considering the guard bits, we can obtain:
\begin{small}
\begin{gather}\label{equ:slice1}
    S = \begin{cases}
     q+G_b, &p=1, q\geq 1\\
     p+G_b, &q=1, p\geq 1\\
     p+q+G_b, &otherwise
    \end{cases}
\end{gather}
\end{small}
\begin{equation}
\small
\label{equ:slice2}
p+(N-1)S\leq L_A
\end{equation}
\begin{equation}
\small
\label{equ:slice3}
q+(K-1)S\leq L_B
\end{equation}
\noindent Thus, we represent the A $\times$ B multiplication with K intermediate stages.
The intermediate stages are shifted left by $S$ bits for every stage, and the effective vertical accumulation of the partial products in the segments from all the stages stacked together are aligned in the $S$ bits segment, as shown in Figure~\ref{fig:amulb}. Then, the multiplication is represented as:
\begin{equation}
\small
\begin{split}
    Prod &=A \times B
    =(\sum_{n=0}^{N-1} f[n]\cdot2^{ S\cdot n })\cdot(\sum_{k=0}^{K-1} g[k]\cdot 2^{ S\cdot k })\\
    &=\sum_{m=0}^{N+K-2} ( \sum_{n+k=m}(f[n]\cdot2^{ S\cdot n }\cdot g[k]\cdot 2^{ S\cdot k }) )\\
    &=\sum_{m=0}^{N+K-2}(\sum_{n+k=m}(f[n] \cdot g[k]) \cdot 2^{S\cdot m})
\end{split}
\label{equ:multiply}
\end{equation}
Different from general multiplications, convolution consists of a sequence of multiplications and accumulations.
Referring to the form of the 1-D convolution in Equation~\ref{eq:ym}, the result of $Prod$ can be represented as:

\begin{equation}
\label{equ:prod}
\small
    Prod =\sum_{m=0}^{N+K-2} y[m]\cdot 2^{ S\cdot m}     
\end{equation}
where intermediate accumulations form a 1-D convolution of two sequences in each of the output segments, and the total number of convolution segments is $N+K-1$.
With the equations above, we can obtain the minimum value of the guard bits under the condition of a single multiplier as follows:
\begin{equation}\label{equ:gb_naive}
\small
G_b =\lceil{log_2 min(K,N)}\rceil \mid \textit{Single multiplier}
\end{equation}
to ensure that the properly accumulated partial products will not overflow.
\end{proof}

According to the above, we can use a high-bitwidth multiplier to process two integers $A$ and $B$ to form $N+K-1$ convolutions of low-bitwidth sequences.

\subsection{1-D Convolution Extension}
Now we have presented an efficient algorithm to use the multiplication unit on a hardware platform to perform the $F_{N,K}$ 1-D convolution. 
However, the size of $N$ is limited by the bitwidth of the hardware multiplier, whereas most real-world applications have much larger input sizes.
Moreover, the $F_{N,K}$ 1-D convolution is often used as a unit building block for other larger-scale convolution operations. 
Thus, we design a new algorithm to use the $F_{N,K}$ 1-D convolution to complete arbitrarily large size 1-D convolutions and any arbitrary convolutions.
As shown in Figure~\ref{fig:amulb}, the order of the elements for these intermediate production is controlled by the order of the elements packed into the slices in $A$ and $B$;
it allows us to devise different accumulation methods to provide flexibility to construct different convolutions beyond the partial convolution on a single multiplier. 

Regarding $F_{N,K}$ as a basic operation, we extend it to the $F_{X\cdot N,K}$ convolution of a longer sequence by summing up the elements in output sequences of different $F_{N,K}$ convolutions. 
\begin{theorem}\label{lemma:h_stack}
The output sequence $y=F_{X\cdot N,K}$ of a 1-D convolution between an $(X\cdot N)$-element sequence $f$ and a $K$-element filter $g$ can be represented as the sum of index-shifted output sequences $y_x=F_{N,K}(f_x,g)$, as shown in Equation \ref{eq:1dnx}. Here, $f_x=f[xN\text{:}(x\text{+}1)N\text{-}1]$ ($x \in [0, X-1]$).
\end{theorem}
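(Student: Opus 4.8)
The plan is to work directly from the definition of 1-D convolution in Equation~\ref{eq:ym}, exploiting the fact that the input length $X\cdot N$ factors cleanly into $X$ contiguous blocks of length $N$. Writing $y = F_{X\cdot N,K}(f,g)$ as $y[m] = \sum_{n+k=m} f[n]g[k]$ with $n \in [0, XN-1]$ and $k \in [0, K-1]$, the key idea is to reindex the summation variable $n$ by its block: set $n = xN + n'$ with $x \in [0, X-1]$ and $n' \in [0, N-1]$, so that $f_x[n'] = f[xN + n']$ by the definition $f_x = f[xN\,{:}\,(x{+}1)N{-}1]$. Conceptually this is exactly the overlap-add decomposition of a long convolution into block convolutions, which also gives the intuition for where the index shifts and overlaps come from.

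First I would substitute this block decomposition into the convolution sum, splitting the single sum over $n$ into a double sum over the block index $x$ and the within-block index $n'$. After moving $x$ to the outermost position, the remaining double sum over $n'$ and $k$ is constrained by $xN + n' + k = m$, i.e. $n' + k = m - xN$. I would then recognize this inner sum as precisely $\sum_{n'+k = m-xN} f_x[n']g[k] = y_x[m - xN]$, where $y_x = F_{N,K}(f_x,g)$ is the short convolution of a single block, supported on indices $[0, N+K-2]$. This directly yields the claimed identity $y[m] = \sum_{x=0}^{X-1} y_x[m-xN]$ of Equation~\ref{eq:1dnx}: the full output is the sum of the $X$ sub-outputs, each shifted by $xN$ positions.

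The only real care required --- and I expect the bookkeeping to be the error-prone part rather than anything conceptually deep --- is to confirm that the reindexing $n \mapsto (x,n')$ is a bijection onto $[0, XN-1]$ and that every product term $f[n]g[k]$ of the original convolution is reproduced exactly once. Each pair $(n,k)$ maps to a unique block $x = \lfloor n/N \rfloor$ and contributes to output position $m = xN + n' + k$ in exactly one shifted copy $y_x$, so there is neither double counting nor an omitted term. The shifted supports $[xN,\, xN+N+K-2]$ of successive blocks overlap by $K-1$ positions whenever $K \geq 2$; in those overlap regions two or more sub-outputs are simply added, which is correct because the coincident output index $m$ receives the distinct boundary-crossing products $f[n]g[k]$ that legitimately originate in different blocks. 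Verifying this overlap accounting is the crux, but it follows immediately once the bijective reindexing is established, so no separate argument for ``stitching'' the blocks together is needed.
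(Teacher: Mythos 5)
Your proposal is correct and is essentially the paper's own argument: the paper decomposes the zero-extended input as $h[n]=\sum_{x=0}^{X-1} h_x[n-xN]$ and interchanges the order of summation in $y[n]=\sum_{k}h[n-k]g[k]$, which is exactly your block reindexing $n = xN + n'$ recognized as the inner convolution $y_x[m-xN]$. Your added bijectivity and overlap-accounting remarks are sound but only make explicit what the paper's zero-extension formalism handles implicitly.
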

\begin{proof} 
Following Equation \ref{eq:pad_f0}, we extend $f$ and $f_x$ sequences into zero extension sequences $h$ and $h_x$. Then $h$ is represented as the sum of the index-shifted sequence $h_x$:
\begin{equation}
\small
    h[n]= \sum_{x=0}^{X-1} h_x[n-xN]
\end{equation}
According to Equation \ref{eq:pad_f}, the convolution output $y$ is calculated with:
\begin{equation}
\small
\begin{split}
    y[n]&=\sum_{k=0}^{K-1}h[n-k]g[k]\\
    &=\sum_{k=0}^{K-1}(\sum_{x=0}^{X-1} h_x[n-xN-k])g[k] \\
    &=\sum_{x=0}^{X-1}( \sum_{k=0}^{K-1}h_x[n-xN-k]g[k])
\end{split}
\end{equation}
Given that $y_x[n]\text{=}\sum_{k\text{=}0}^{K\text{-}1}h_x[n\text{-}k]g[k]$, we can represent the sequence $y$ as the sum of the index-shifted $y_x$ sequences.
\begin{equation}
\label{eq:1dnx}
\small
    y[n]=\sum_{x=0}^{X-1} y_x[n-xN]
\end{equation}
\end{proof}

Equation \ref{eq:1dnx} reveals that the extended $F_{X\cdot N,K}$ 1-D convolution is computed by a shift-accumulation pattern with $F_{N,K}$ base operation results. 
Figure \ref{fig:horistack} demonstrates how the elements in different $y_x$ are summed up to the elements in $y$. Each computed $y_x$ sequence is shifted $xN$ indices and then summed up to form the element of $y$, which is marked by the red and blue squares.
In such a case, the guard bit of 
\begin{equation}
G_b = \lceil{log_2K}\rceil \mid \textit{1-D convolution}
\end{equation}
is also adjusted with additional bits to prevent the partial results from overflowing.

\begin{figure}[]
    \centering
    \includegraphics[width=0.48\textwidth]{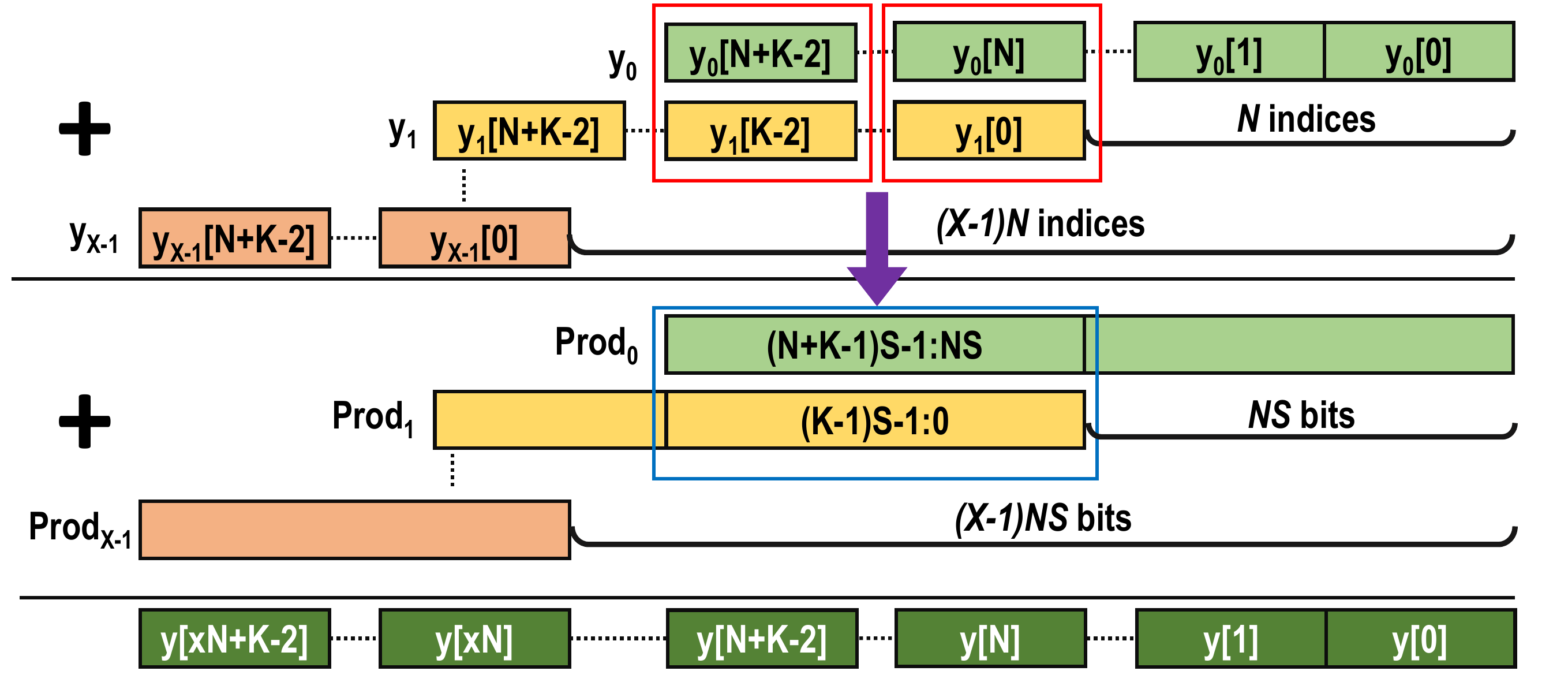}
    \caption{\textcolor{black}{Computation of $F_{XN,K}$ 1-D convolution.}}
    \label{fig:horistack}
\end{figure}

\subsection{DNN Convolution Extension}

Commonly, the convolution layer in DNN computes a feature map array $I[C_i][H_i][W_i]$ and a kernel array $W[C_o][C_i][K][K]$ for an output feature map array $O[C_o][H_o][W_o]$ (assuming $H_i\text{=}H_o\text{+}K\text{-}1$ and $W_i\text{=}W_o\text{+}K\text{-}1$) which can be represented as:
\begin{equation}
\label{eq:conv_org}
\small
\begin{split}
   O[c_o]&[h][w]\text{=} \\
   &\sum_{c_i\text{=}0}^{C_i\text{-}1}\sum_{k_h\text{=}0}^{K\text{-}1}\sum_{k_w\text{=}0}^{K\text{-}1}I[c_i][h\text{+}k_h][w\text{+}k_w]W[c_o][c_i][k_h][k_w]  
\end{split}
\end{equation}
With the inherent convolution computation pattern, we can compute a DNN convolution layer with $F_{N,K}$ 1-D convolution as the base operation, as shown in Theorem \ref{thrm:convolution}. 

\begin{theorem}
\label{thrm:convolution}
For a DNN convolution, the output feature map can be computed using the $F_{X\cdot N,K}$ 1-D convolution with the following equation:
\begin{equation}
\small
\label{eq:dnn_hikonv}
    O[c_o][h][w]=\sum_{c_i\text{=}0}^{C_i\text{-}1}\sum_{k_h\text{=}0}^{K\text{-}1} y_{c_i,c_o,h,k_h}[w\text{+}K\text{-}1]
\end{equation}
Here, the term $y_{c_i,c_o,h,k_h}$ is a 1-D convolution result with $X=\lceil{\frac{W_i}{N}}\rceil-1$:
\begin{equation}
\small
\label{eq:conv_seg}
y_{c_i,c_o,h,k_h}=F_{X\cdot N,K}(f, g) 
\end{equation}
where $f$ and $g$ are defined as:
\begin{equation}
\small
\label{eq:I_PRIME}   
\begin{split}
f[w]&=\begin{cases} I[c_i][h\text{+}k_h][w], & 0\leq h < H_i, 0\leq i < W_i\\
0, & otherwise
\end{cases}\\
g&=W[c_o][c_i][k_h][K\text{-}1\text{:}0]
\end{split}
\end{equation}
\end{theorem}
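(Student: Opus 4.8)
The plan is to reduce the three-fold summation in the standard DNN convolution of Equation~\ref{eq:conv_org} to a sum of one-dimensional convolutions by isolating the innermost summation over $k_w$ and recognizing it as an $F_{N,K}$-type convolution evaluated at a single output index. First I would fix the indices $c_i$, $c_o$, $h$, and $k_h$, so that the remaining inner sum $\sum_{k_w=0}^{K-1} I[c_i][h+k_h][w+k_w]\,W[c_o][c_i][k_h][k_w]$ is a correlation of a single input row against a single kernel row. The key algebraic step is to convert this correlation into a genuine convolution, which, per Equation~\ref{eq:pad_f}, flips the kernel: I would set $f[w]=I[c_i][h+k_h][w]$ and define $g$ as the \emph{reversed} kernel row $g=W[c_o][c_i][k_h][K\text{-}1\text{:}0]$, i.e. $g[k]=W[c_o][c_i][k_h][K\text{-}1\text{-}k]$, which is exactly the definition given in Equation~\ref{eq:I_PRIME}.

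Next I would verify the index alignment. Evaluating the convolution $y=f\ast g$ at the shifted position $m=w+K-1$ gives $y[w+K-1]=\sum_{k=0}^{K-1} f[w+K-1-k]\,g[k]$. Substituting the reversal $g[k]=W[\cdots][K-1-k]$ and changing the summation variable to $k_w=K-1-k$ collapses this expression precisely to $\sum_{k_w=0}^{K-1} I[c_i][h+k_h][w+k_w]\,W[c_o][c_i][k_h][k_w]$, recovering the inner sum. The explicit $+K-1$ offset appearing in Equation~\ref{eq:dnn_hikonv} is exactly what compensates for the kernel flip together with the causal indexing of the convolution operator in Equation~\ref{eq:pad_f}. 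Summing the recovered inner sums over $c_i$ and $k_h$ then reproduces the full output $O[c_o][h][w]$ of Equation~\ref{eq:conv_org}, which establishes Equation~\ref{eq:dnn_hikonv}.

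Finally, I would invoke Theorem~\ref{lemma:h_stack} to handle the practical constraint that a row has $W_i$ elements while a single base operation $F_{N,K}$ accepts only $N$ elements: expressing the row-level convolution as $F_{X\cdot N,K}$ with $X=\lceil W_i/N\rceil-1$ lets the shift-accumulation decomposition of Equation~\ref{eq:1dnx} assemble the row result from the individual $F_{N,K}$ blocks, while the zero-extension of $f$ in Equation~\ref{eq:I_PRIME} guarantees the out-of-range terms vanish so no spurious products are introduced. The main obstacle I anticipate is the index bookkeeping rather than any deep argument: one must track the kernel reversal, the $+K-1$ shift, and the interaction between the zero-padding convention and the chosen value of $X$ simultaneously, since an off-by-one in any of these would misalign the 1-D outputs with the intended DNN output positions.
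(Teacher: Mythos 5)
Your proposal is correct and takes essentially the same route as the paper's own proof: fix $c_i$, $c_o$, $h$, $k_h$, read the definition $g=W[c_o][c_i][k_h][K\text{-}1\text{:}0]$ as the reversed kernel row $g[k]=W[c_o][c_i][k_h][K-1-k]$, evaluate the 1-D convolution at the shifted index $w+K-1$ with a change of variables $k_w=K-1-k$ to recover the inner correlation, and then sum over $c_i$ and $k_h$ to rebuild Equation~\ref{eq:conv_org}. Your explicit appeal to Theorem~\ref{lemma:h_stack} for assembling the row-level $F_{X\cdot N,K}$ result from $F_{N,K}$ blocks is a detail the paper leaves implicit, but it does not alter the argument.
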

\begin{proof}
For abbreviation, we denote the sequence $y_{c_i,c_o,h,k_h}$ as $y'$. According to the definition of the 1-D convolution, the sequence $y'$ can be computed using the following equation:

\begin{equation}
\small
\label{eq:o_partial2}
\begin{split}
    y'[n]&=\sum_{k=0}^{K-1}f[n\text{-}k]g[k]\\
    &=\sum_{k=0}^{K\text{-}1}I[c_i][h\text{+}k_h][n\text{-}k]W[c_o][c_i][k_h][K\text{-}1\text{-}k]\\
    &=\sum_{k=0}^{K\text{-}1}I[c_i][h\text{+}k_h][n\text{+}k\text{-}K\text{+}1]W[c_o][c_i][k_h][k]
\end{split}
\end{equation}
Then, we have
\begin{equation}
\small
\label{eq:off}
        y_{c_i,c_o,h,k_h}[n\text{+}K\text{-}1]=\sum_{k=0}^{K\text{-}1}I[c_i][h\text{+}k_h][n\text{+}k]W[c_o][c_i][k_h][k]
\end{equation}
With Equation \ref{eq:off}, Equation \ref{eq:conv_org} could be represented as:
\begin{equation}
\small
\label{eq:1dfor3d}
\begin{split}
       O&[c_o][h][w] \\
       &\text{=}\sum_{c_i\text{=}0}^{C_i\text{-}1}\sum_{k_h\text{=}0}^{K\text{-}1}\sum_{k_w\text{=}0}^{K\text{-}1}I[c_i][h\text{+}k_h][w\text{+}k_w]W[c_o][c_i][k_h][k_w]\\ 
       &\text{=}\sum_{c_i\text{=}0}^{C_i\text{-}1}\sum_{k_h\text{=}0}^{K\text{-}1}(\sum_{k\text{=}0}^{K\text{-}1}I[c_i][h\text{+}k_h][w\text{+}k]W[c_o][c_i][k_h][k])\\
       &\text{=}\sum_{c_i\text{=}0}^{C_i\text{-}1}\sum_{k_h\text{=}0}^{K\text{-}1}y_{c_i,c_o,h,k_h}[n\text{+}K\text{-}1]
\end{split}    
\end{equation}
\end{proof}
A convolution layer in DNN has multiple input and output channels, which require accumulations of channel-wise features to form the final output. 
By grouping the $F_{N,K}$ output sequences with different $c_i$ but the same $c_o$,$h$,$k_h$, and $x$ indices and accumulating the corresponding product $Prod$,
we can perform the channel-wise accumulation of the feature maps. 
In this case, the required number of guard bits is 
\begin{equation}
G_b=\lceil{log_2(M\cdot min(K,N))}\rceil \mid \textit{DNN convolution}
\end{equation}
for the accumulation of $M$ feature maps along the input channel in a convolution.

\subsection{Maximizing Processing Throughput}
\textcolor{black}{
We can derive the total number of effective operations with respect to the multiplications and the accumulations in each $F_{N,K}(f,g)$ convolution. 
For multiplication, there are a total of $K$ times $N$ product terms summed up to form the intermediate results,
so the total number of multiplications performed in the 1-D convolution is $N\times K$. 
Meanwhile, the $N+K-1$ sets of product terms are summed up to form one element in the output sequence. 
By excluding zero products,
we calculate the total operation number in one $F_{N,K}(f,g)$ convolution in Equation \ref{eq:tot_ops}.}
\begin{equation}
\small
\label{eq:tot_ops}
\begin{split}
        \text{\# of Ops}&= \text{\# of Multiplications} + \text{\# of Accumulations}\\
        &= N\times K + (N\times K - (N+K-1) )\\
        &= N\times K + (N-1)(K-1) 
\end{split}
\end{equation}

\textcolor{black}{For each set of quantized convolution for $p$, $q$ bits input data, and a multiplier configuration with $L_A$ and $L_B$ size of inputs, we can derive the $N$, $K$ pairs that achieve the maximum number of equivalent operations performed by each multiplication within the constraints specified by Equations \ref{equ:slice2} and \ref{equ:slice3}. 
This number of operations is also equivalent to the throughput of the multiplier since they are conducted in a multiplication operation.
In order to maximize the effective number of operations for HiKonv, we formulate it as a discrete optimization problem as follows.}

Maximize: 
\begin{equation}
\small
    \# of Ops
\end{equation}

Subject to:
\begin{equation}
\small
\begin{split}
         & 0< N\\
         & 0<K\\
         p+(N-1)&(p+q+G_b)\leq L_A\\
         q+(K-1)&(p+q+G_b)\leq L_B\\
\end{split}
\end{equation}
Where in the different convolution conditions, we have different guard bits constraints:
\begin{equation}
\small
    G_b = 
    \begin{cases} 
    \lceil{log_2 min(K,N)}\rceil, & \textit{Single multiplier}; \\
    \lceil{log_2K}\rceil, & \textit{1-D convolution}; \\
    \lceil{log_2(M\cdot min(K,N))}\rceil, & \textit{DNN convolution}.
\end{cases}
\end{equation}

Then we solve this optimization problem with a straightforward search algorithm by iterating all possible pairs of $N,K$ to find the optimal configuration, as shown in Algorithm \ref{alg:optimalNK}.
After the search, the optimal configuration of $N$, $K$ with the maximum operations performed with a single multiplier is obtained.

\begin{algorithm}[t]
\caption{Optimal Throughput Search}\label{alg:optimalNK}
\footnotesize
\begin{algorithmic}[1]

\State $Max_N =(L_A-p)/(p+q)+1, Opt_K=0$
\State $Max_K =(L_B-q)/(p+q)+1, Opt_N=0$
\State $Max_{ops}=0$
\For{$k=1,k< Max_K, k\texttt{++}$}
\For{$n=1,n< Max_N, n\texttt{++}$}
\State $Cond_1= p+(n-1)(p+q+G_b)\leq L_A$
\State $Cond_2= q+(k-1)(p+q+G_b)\leq L_B$
\State $Cur_{ops}=n \times k + (n-1) \times (k-1)$

\If{$Cond_1 \text{ \& } Cond_2 \text{ \& } Cur_{ops} >  Max_{ops}$ }
\State $Opt_N=n,Opt_K=k,Max_{ops}=Cur_{ops}$
\EndIf
\EndFor
\EndFor
\State \textbf{Return} $Opt_N,Opt_K,Max_{ops}$
\end{algorithmic}
\end{algorithm}

Figure~\ref{fig:maxMul} shows two examples of multipliers with different bitwidth configurations. For a given high bitwidth processing unit, the maximum supported throughput (multiplication and addition) of a given processing unit varies with $N$ and $K$, which are determined by the values of $p$ and $q$. 
For example, when the input bitwidths of a multiplier are $27$ and $18$ bits, respectively (Figure~\ref{fig:maxMul27}), according to Equation~\ref{equ:slice1}, \ref{equ:slice2}, \ref{equ:slice3} and \ref{equ:gb_naive}, we could obtain $S=4, N=9, K=4$ when $p$ and $q$ are both 1-bit binary values. 
The maximum supported throughput of this specific multiplier is equivalent to 60 ops per cycle, which are 36 multiplications and 24 additions that are required for computing the convolution if all the computation is carried out in a conventional way following the 1-D convolution algorithm without HiKonv. Here, with HiKonv, one multiplication of high-bitwidth multiplier with our specific slicing/packing solution is equivalent to it. In addition, when $p$ and $q$ are both 4 bits, the multiplier provides 8 equivalent operations per cycle (6 multiplication and 2 addition). In Figure~\ref{fig:maxMul}, we show the configurations for $p$ and $q$ from 1-bit to 8-bit, which are the common bitwidths of low-precision quantization. The principle generally applies to all bitwidths. When the inputs for the multiplier are both 32 bits, these values are further increased to 128 ops per cycle and 13 ops per cycle for 1-bit and 4-bit $p$ and $q$, as shown in Figure~\ref{fig:maxMul32}. For the 1-D and DNN convolution layer, we only need to change the $G_b$ constraint in the searching algorithm and proceed with the straightforward search to obtain the optimal values.

\begin{figure}[]
\centering
\subfloat[A = 27 bits, B = 18 bits]{%
  \includegraphics[clip,width=0.8\columnwidth]{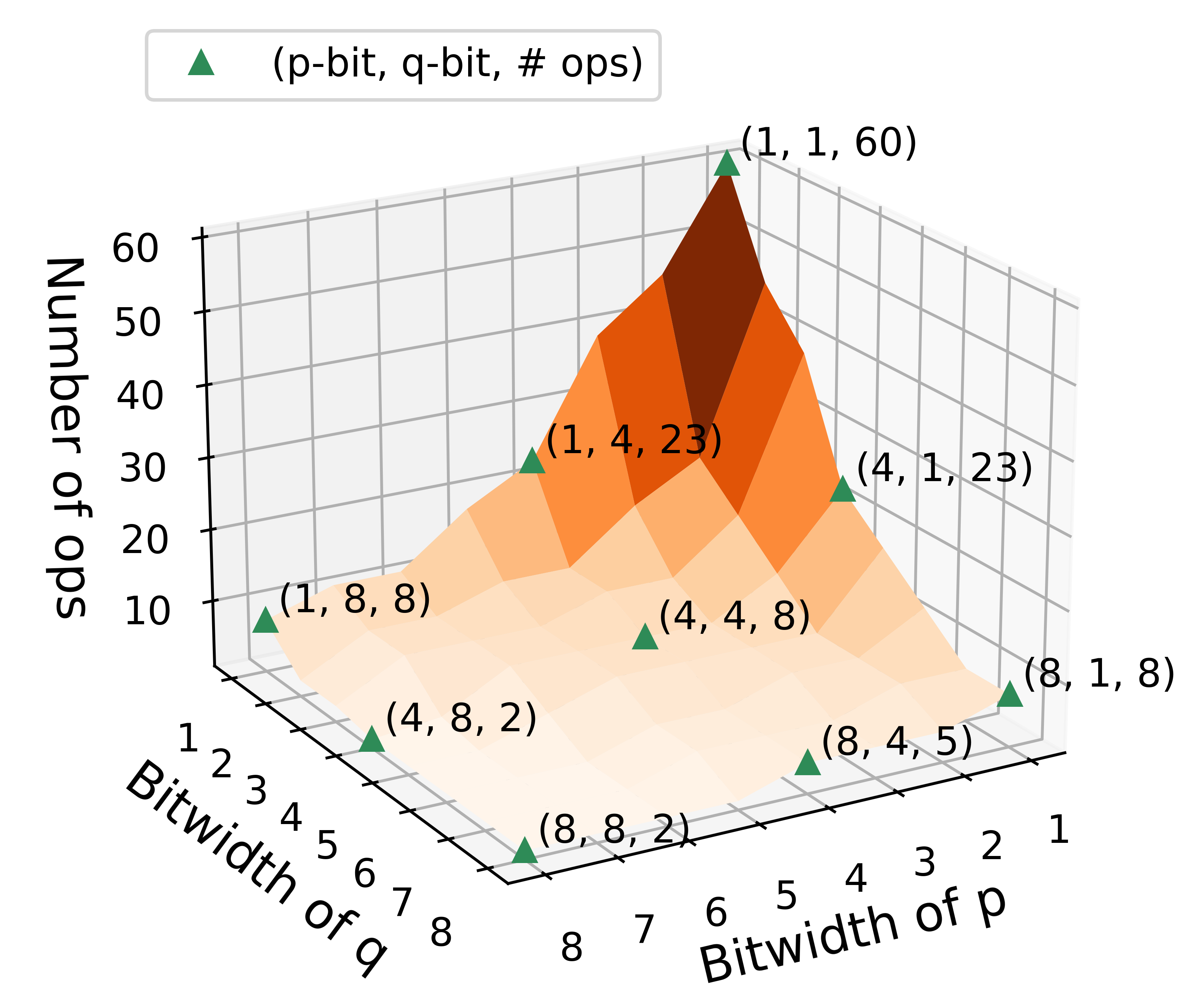}%
  \label{fig:maxMul27}
}

\subfloat[A = 32 bits, B = 32 bits]{%
  \includegraphics[clip,width=0.8\columnwidth]{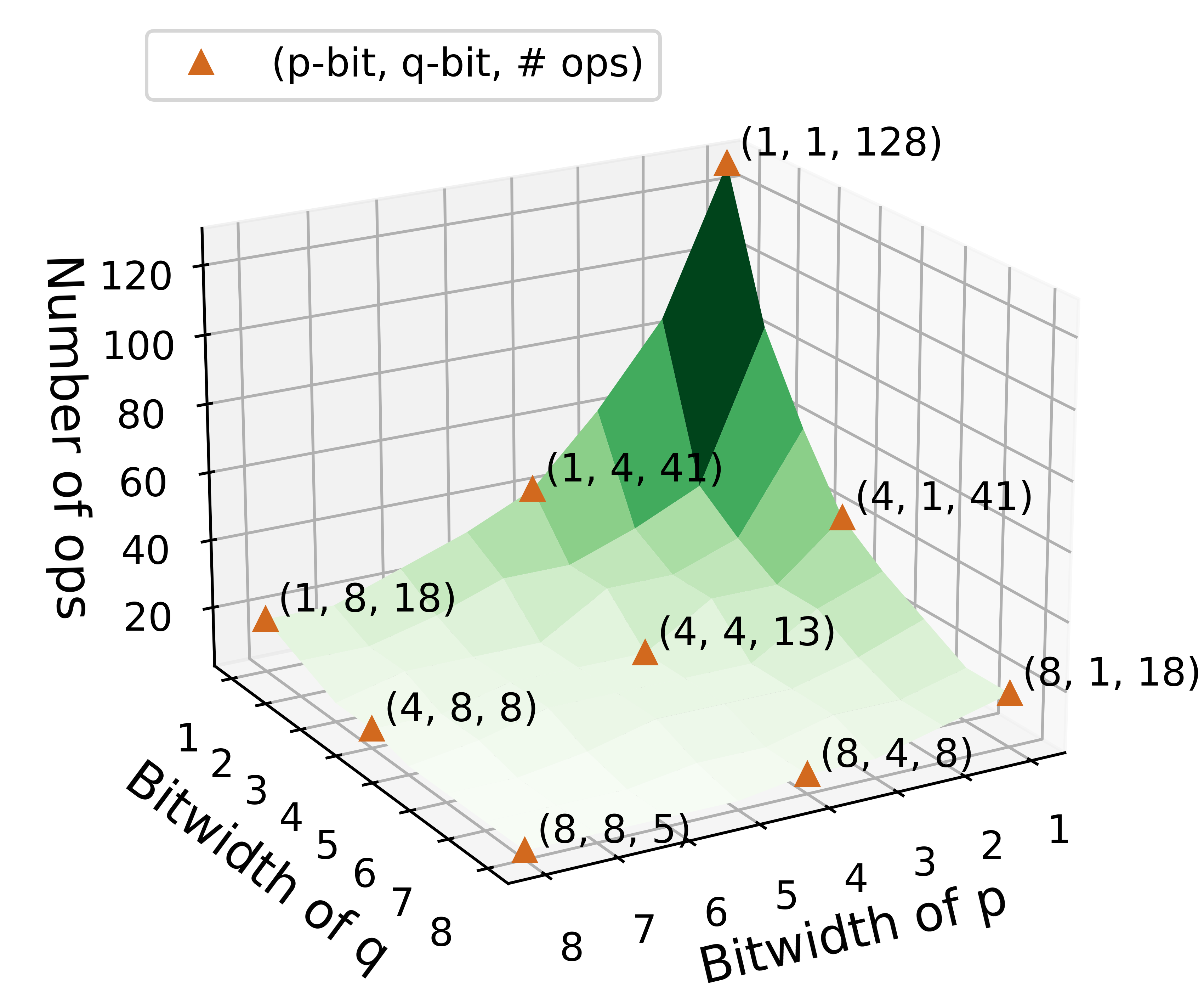}%
  \label{fig:maxMul32}
}
\caption{Throughput of processing units with different bitwidth settings.}
\label{fig:maxMul}
\end{figure}

\subsection{\textcolor{black}{Sign Extension for Sign Integers}}

\textcolor{black}{
HiKonv differs from conventional multi-channel multiplication methods~\cite{xilinxint4, xilinxint8} with a more comprehensive bit-wise management of the input data and more efficient use of the outputs.
To implement HiKonv on the existing computing platforms, including GPP and FPGA,
the overall processes are abstracted as \textit{input packing}, \textit{large bitwidth multiplication} and \textit{output split}.
Regarding Equation~\ref{eq:inpack},
packing the input is equivalent to extending low-bitwidth data to higher bitwidth and then performing accumulation of multiple extended data.
However, the two's complementary representation in modern computing systems performs sign extension when low-bitwidth data is extended to a larger bitwidth length.
Therefore, in the form of binary representation, the low-bitwidth input that has been packed into the left segment is a combination of the low-bitwidth data itself with the sign extension of the data in the segment to the right of it.}

\textcolor{black}{
Same as the input packing,
extraction of $y[m]$ from the $Prod$ can not be simply processed with bit-wise segmentation
because the previous segment's sign is continuously added to the current segment.
More specifically, for $y[m+1]$, if $y[m]$ is a negative value, its sign extension stands for $-1$, which in turn passes the sign extension to the segment on its left.
The sign of the segments must be considered to extract the correct value of $y[m]$.
}
\section{Platform Oriented Implementations}\label{sec:impl}

Compared to the multiplication operation,
the above packing and split operations have different architectural preferences due to the unique bit-wise operations, such as adopting shift registers or using unique hardware logic for efficient processing.
In this section, we provide detailed implementations of HiKonv regarding the architectural characteristics of the general-purpose processors and reconfigurable platforms.

\subsection{HiKonv on General Purpose Processors}

We provide two versions of the HiKonv implementation on general-purpose processors, including both Intel X86\_64 CPUs and ARM processors. 
Modern CPUs are usually 64-bit processors that are equipped with 32-bit integer multipliers. With such an architecture, a 32-bit multiplication is performed with 32-bit registers as operands while 64-bit is constructed using two 32-bit registers. 
The 64-bit product is stored in two 32-bit registers: the upper half in one and the lower half in the other. 
A 64-bit multiplication is done in the same way, except that the registers are 64-bit registers. 
Thus, without loss of generality, we use 32-bit multiplication for our HiKonv implementations.

\subsubsection{Basic HiKonv Operations}

\begin{algorithm}[t]
\caption{1-D HiKonv with single multiplier on GPP}\label{alg:cpuimpl}
\footnotesize
\begin{algorithmic}[1]
\For{$n=0, n<N, n\texttt{++}$} \Comment{Pre-packing}
\State $A = A \texttt{ + } f[n] \texttt{ << } (s*n) $
\EndFor
\State
\State $prev = A*B$ \Comment{Prologue multiplication}
\State
\For{$c=N,c<M \texttt{ + } N, c\texttt{++}$} 

\setlength{\parindent}{-.65em}
\colorbox[RGB]{239,240,241}{
\begin{minipage}{0.98\linewidth}
\For{$n=0,n<N, n\texttt{++}$}\Comment{\textbf{Input data packing}}
\State $A = A \texttt{ + } f[c\texttt{+}n] \texttt{ << } (s*n) $
\EndFor
\end{minipage}
}
\State
\State $this = A*B$\Comment{Multiplication}

\setlength{\parindent}{-.65em}
\colorbox[RGB]{255,189,200}{
\begin{minipage}{0.98\linewidth}
\State $y = prev \texttt{ << } (s*(N\texttt{-}1)) \texttt{ + } this \texttt{ << } s $
\LineComment{\textbf{Intermediate product shifting}}
\end{minipage}}
\State

\setlength{\parindent}{-.65em}
\colorbox[RGB]{200,228,250}{
\begin{minipage}{0.98\linewidth}
\For{$n=0,n<N, n\texttt{++}$}\Comment{\textbf{Output data splitting}}
\State $output[c \texttt{-} n] = (y \texttt{ >> } (s*(N\texttt{-}n\texttt{-}1))) \texttt{ }\&\texttt{ } (2^s \texttt{-} 1) $
\State $// \texttt{ } carried\_sign = y \texttt{ >> } (s*(N\texttt{-}n\texttt{-}1) - 1)$
\State $// \texttt{ } output[c \texttt{-} n] = output[c \texttt{-} n] + carried\_sign$
\EndFor \Comment{Signed HiKonv: Check for the sign bit}
\end{minipage}}

\State \noindent
\hspace{-0.5em}\EndFor
\State \hspace{-1.5em}\textbf{Return} $output$
\end{algorithmic}
\end{algorithm}

HiKonv supports operations between positive and negative integer inputs.
Meanwhile, most activations and weights in many modern neural networks can be trained to only positive numbers. It is beneficial to utilize an unsigned version of HiKonv as the implementation is simpler than the signed HiKonv, which requires sign checking in both input packing and output split (detailed in the following subsections).
Due to the different implementation requirements, we provide two sets of HiKonv implementations, which are the HiKonv implementation that only supports unsigned input and the HiKonv implementation that supports signed inputs, on general-purpose processors. 
The unsigned HiKonv implementation only operates with positive integer inputs, and the signed HiKonv implementation supports both positive and negative inputs.
The implementations of the three main processes are shown in Algorithm~\ref{alg:cpuimpl}.
Except the major \textit{input packing}, \textit{multiplication} and \textit{output split} operations, there are additional operations named as \textit{Pre-packing} and \textit{Alignment} due to the efficient shifting and store-and-accumulate operations on the ALU in GPPs.
We explain each step in the following.
\begin{enumerate}
    \item \textit{Input data packing}
    According to Equation~\ref{equ:multiply}, we can construct the large bitwidth inputs with shift-add operations. For an input vector with $N$ segments, we need $N-1$ shift-add operations to pack the input data properly. To pack an input, $f[n]$ needs to be shifted by $s\times n$ bits to the left and then accumulates to the input bit-vector. Packing the other input to the multiplier follows the 
   same procedure. In the context of computing convolutional neural networks, the features need to be packed during the runtime, whereas the weights can be pre-packed, enabling further optimization opportunities. This step is the same for both the signed and the unsigned implementations of HiKonv.
    \item \textit{Intermediate product shifting}
    We adopt the horizontal stacking strategy discussed in Section~\ref{sec:formula}. As seen in Figure~\ref{fig:amulb}, the output of one block depends on the multiplication results from both the present and the previous iterations. The multiplication operation is inherently multiply-accumulate between the two outputs; however, the two results need to be aligned and accumulated to produce the output correctly in an output segment. Considering the limited length of the shift registers in GPPs, we shift the partial result from the previous iteration to the right and the partial result from the current iteration to the left by the appropriate number of bits, respectively. Once they are properly aligned, we add them together to construct the complete result for this iteration and extract the bits for the output.
    \item \textit{Output data splitting}
    This step differs between the signed and unsigned implementations, as shown in Algorithm~\ref{alg:cpuimpl}. Due to the sign extension in GPP platforms, dealing with signed values requires checking the sign bits in different segments, where we take an additional step to check the sign of the preceding output. As seen in Equation~\ref{equ:output_split}, the most effective way is to shift the output and check the sign bit immediately preceding the current output segment. If the sign equals to one, indicating that the segment to the right is a negative number, we need to compensate and adjust the current output. In the unsigned version of HiKonv, since all numbers are positive, there is no need to perform the sign bit checking.
\end{enumerate}

\subsubsection{1-D Convolutions with HiKonv}
For 1-D convolution, the baseline implementation has nested loops of two levels. The outer loop scans through the input sequence, whereas the inner loop scans through the kernel sequence.
In the implementation of 1-D convolution with HiKonv, we only need one loop to compute the convolution, as demonstrated by Algorithm~\ref{alg:cpuimpl}. We need a prologue to perform the first multiplication operation and set up the convolution iterations. In each of the iterations, it performs the basic three operations in HiKonv, including input data packing, intermediate product shifting, and output data splitting. There is only one multiplication in each loop; the number of bits to shift and bit-masks can be pre-calculated and pre-defined offline based on Algorithm~\ref{alg:optimalNK}.

\textcolor{black}{To better illustrate the 1-D convolution with HiKonv, we provide the following numerical example in Figure~\ref{fig:nume_example}. We demonstrate a $F_{3,2}$ convolution where we convolve the three-element array $[11, 9, 7]$ with a two-element kernel $[3, 2]$. The result of the convolution should be $[33, 49, 39, 14]$. On the left of Figure~\ref{fig:nume_example}, we show how we pack the inputs into two 32-bit bitstrings and perform the multiplication with binary values. On the right, we show the ideal process in decimals. It is easy to compare and see how the convolution can be converted into one multiplication of which the product contains the results of the convolution output. In practice, the binary multiplicands are 11543559 and 3074 in decimal. Their product is 35484900366, which is 100001000011000100001001110000001110 in binary as shown on the left of Figure~\ref{fig:nume_example}.}

\begin{figure}
    \centering
    \includegraphics[width=0.48\textwidth]{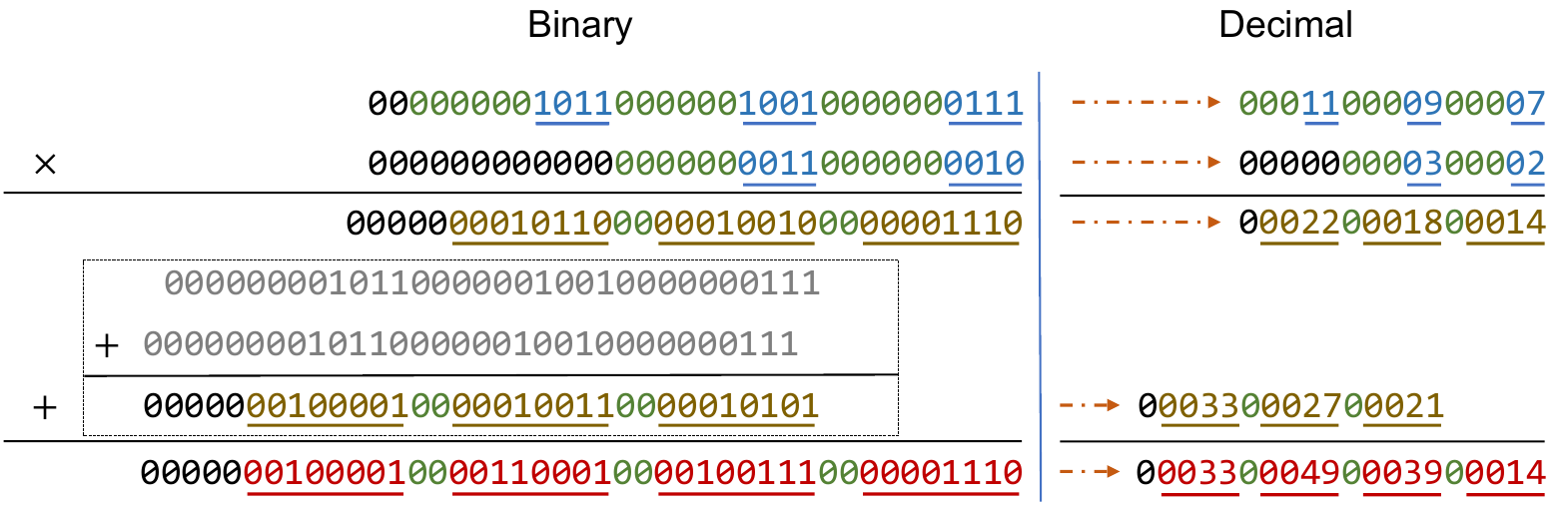}
    \caption{A numerical example of a $F_{3,2}$ 1-D convolution.}
    \label{fig:nume_example}
\end{figure}

\subsubsection{2-D Convolutions with HiKonv}
We implement the DNN layer by embedding the 1-D convolution in the six-level nested loops that scan through the input channel, output channel, output height, output width, kernel height, and kernel width according to Theorem~\ref{thrm:convolution}. The order of the nested loop does not have an impact on the functionality. In order to fit the 1-D convolution algorithm into the context of 2-D convolution, we arrange the order of the loops in such a way that we first compute the partial results of each row and then accumulate them in the channel dimension. HiKonv supports even higher dimensional convolution with a similar approach.

\subsection{HiKonv on Reconfigurable Hardware}
Reconfigurable hardware provides a finer granularity control of the data path down to a single bit.
We take advantage of the flexibility in both input data packing and output splitting with a small number of additional logic resources to further improve the effectiveness of the required operations.
Particularly, we target the Xilinx FPGA platforms, which provide DSP48E2 resource that contains a $27\times18$ bits multiplier~\cite{xilinxint4}.

\subsubsection{HiKonv for single DSP}
To deploy HiKonv on a single FPGA DSP, we first explore the optimal $N$, $K$, and $S$ values with Algorithm~\ref{alg:optimalNK} based on the target bitwidth of the inputs, and configure the hardware unit with these parameters.

\begin{figure}[]
    \centering
    \includegraphics[width=0.48\textwidth]{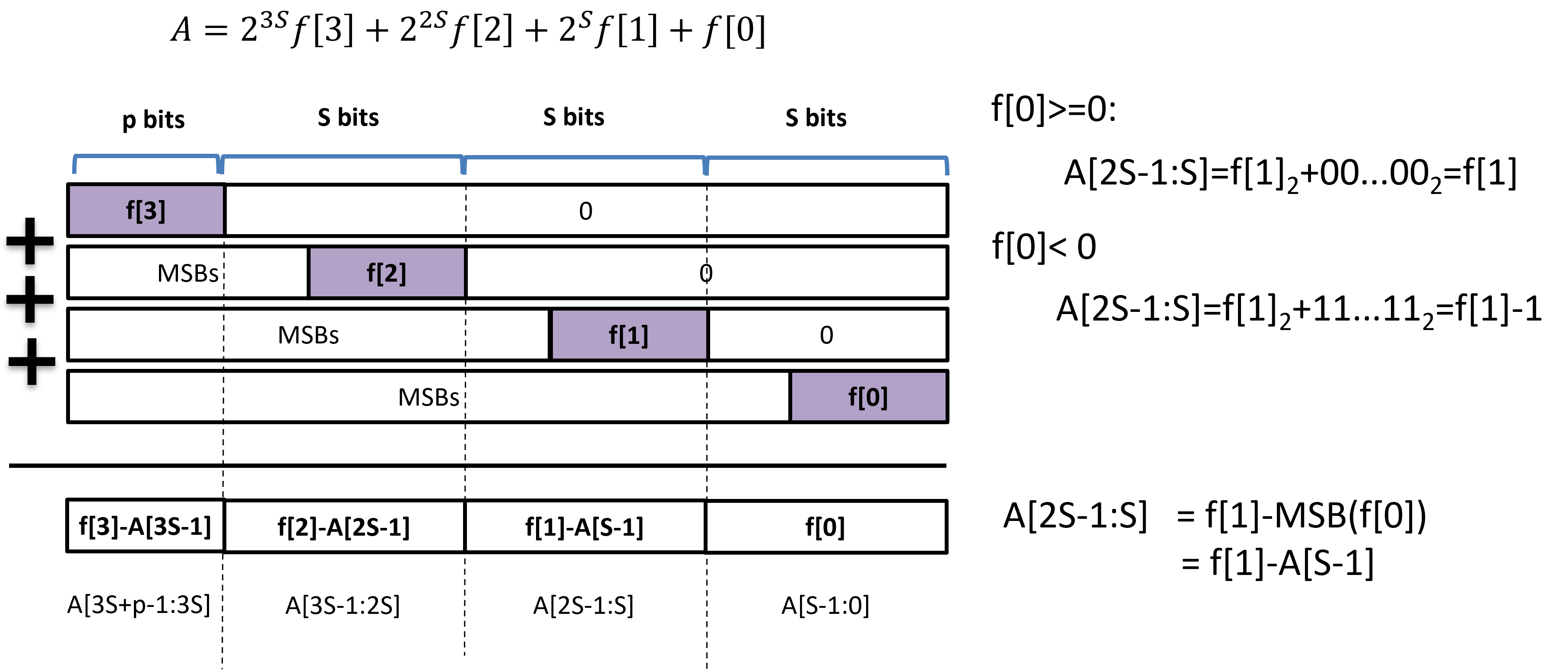}
    \caption{Input packing for signed integer $f$ sequence.}
    \label{fig:signpack}
\end{figure}

\begin{enumerate}
\item \textit{Input data packing} 
Since the packing of the elements is equivalent to performing a long bitwidth shift add operation,
if $f$ and $g$ are all unsigned integers, we can construct $A$ and $B$ with bit-wise assignments to the corresponding slice with a simple zero extension:
\begin{equation}
\small
\begin{split}
   &A[S(n\text{+}1)\text{-}1\text{:}Sn]= f[n]\\
    &B[S(k\text{+}1)\text{-}1\text{:}Sk]= g[k]
\end{split}
\end{equation}

When $f$ and $g$ contain signed integers, 
instead of performing an extension-shift-addition operation, we simplify the logic to check the sign bit of the low bitwidth data before packing it into the slices.
As shown in Figure~\ref{fig:signpack},
taking the second slice as an instance, in two’s complement expression, if $f[0]$ is positive, the MSB is 0, and the sign extension part is all zero.
On the other hand, if $f[0]$ is negative, the sign extension part is all $1$ in the binary expression and represents -1 in the two's complement representation.
To take advantage of the bit flexibility of FPGA devices, we still perform a bit-wise assignment of the input data to the slice but with an online checking of the sign value of the previous slice.
Equation \ref{eq:signpackslice} shows the packing formula for the signed integers from $f$ and $g$ into the $A$ and $B$ multiplicands.
\begin{equation}\label{eq:signpackslice}
\small
\begin{split}
      A[S(n\text{+}1)\text{-}1\text{:}Sn]= \begin{cases}
    f[0] &, n=0\\
    f[n]\text{-}A[Sn\text{-}1]&, n>0
    \end{cases}\\
    B[S(k\text{+}1)\text{-}1\text{:}Sk]= \begin{cases}
    g[0] &, k=0\\
    g[k]\text{-}B[Sk\text{-}1]&, k>0
    \end{cases}
\end{split}
\end{equation}

In such a condition, when packing $f[1]$ into the second slice, we can decrement 1 from $f[1]$ to obtain the value in that slice.
To simplify the logic, there only requires a simple 1-bit decrementer before the concatenation of the changed values to form the entire multiplicand; whether enabling this 1-bit decrementer is based on the sign bit from the previous slice, as shown as the \textit{Packing Decrementers} in Figure~\ref{fig:convmicroarch}.
Note here, due to the pre-adder in the Xilinx DSP48E2 module, resource for one of the \textit{Packing Decrementers} can be saved with this pre-adder, but it is omitted here to simplify the presentation.
The packing process works recursively for all the slices.
This approach saves the resource from using large bitwidth shift registers and adders and, in return, only costs a small decrementer and single-bit Boolean logic.

\item \textit{Large bitwidth production}
The production is done with the adoption of the multiplier in the FPGA DSP, noted as \textit{DSP multiplier} in Figure~\ref{fig:convmicroarch}; the packed weight sequence and feature sequence are directly provided to the 27-bit and 18-bit input data ports of the multiplier. The output is obtained in a single clock cycle with a pipeline depth of 1. In this way, with a single multiplication, multiple low-bitwidth multiplications and additions are performed.

\item \textit{Output split}
Output split is a reversed process of packing the data into slices.
The final output of the production is formed by multiple $y[m]$ values. Each of the current S-bit segmentation carries on the impact of the sign bit from the S-bit segmentation to the right of it, except for the very right segmentation, as shown in Equation~\ref{equ:output_split}.
Due to the two's complement representation of the values in modern computing systems, the sign extension mechanism extends the sign value of current $y[m]$ to the MSB of the final output, so the value of each $y[m]$ is a function of the current value in the current segmentation with the sign value in the previous segmentation, as shown in the Equation~\ref{equ:output_split}. Similarly, the implementation of Equation~\ref{equ:output_split} adopts a simple incrementer as the input packing instead of using a shift-and-add module that requires more logic resources.
\begin{equation}
\small
\begin{split}
    y[m]= \begin{cases}
    Prod[S\text{-}1\text{:}0],& m=0\\
    Prod[S(m\text{+}1)\text{-}1\text{:}Sm]\text{+}Prod[Sm\text{-}1],& m>0
    \end{cases}
\end{split}
\label{equ:output_split}
\end{equation}
\end{enumerate}

\subsubsection{Single convolver architecture}

\begin{figure}[ht]
    \centering
    \includegraphics[width=0.4\textwidth]{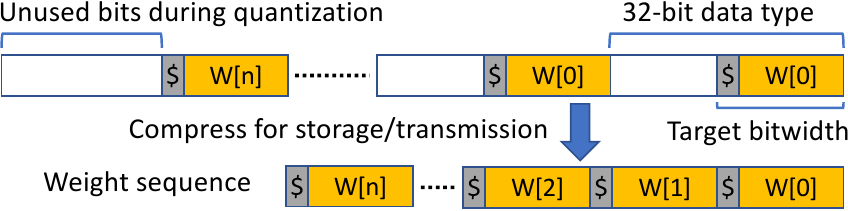}
    \caption{Weight compression.}
    \label{fig:weightcompression}
\end{figure}

For quantized convolution in DNNs, weight data can be compressed prior to inference by discarding the unused bits during quantization~\cite{vecq, ul2q, tdla}, as shown in Figure~\ref{fig:weightcompression}.
The compressed weights reduce the memory access overhead by a fraction of the reduced bits and also simplify the data transfer logic to save hardware resources. For example, compressing the weights quantized to 4 bits reduces memory to $1/8$ compared to the original 32 bits data type for storage.
The same compression can also be applied to the feature data during runtime.
Both weight and feature data can be extracted at runtime from the compact weight sequence and feature sequence and packed into the slices of the multiplier's inputs without affecting throughput.
With the integration of the input data pack and the output split, the microarchitecture of a new convolver that is mainly built with FPGA DSPs is shown in Figure~\ref{fig:convmicroarch}.

A certain number of elements of the compressed weight sequence and the feature sequence are first buffered in \textit{Input Registers}, and then passed to \textit{Packing Decrementer}, which is constructed with multiple decrementers, as discussed in the previous section. The two inputs to the multiplier are then packed as multiplicands and passed to the multiplier of the DSP in the FPGA chip. The number of decrementers is based on the previously explored $N$ and $K$ with Algorithm~\ref{alg:optimalNK}. The decrementers provide the packed values in a single clock cycle, and $Prod$ is obtained after one clock cycle from the DSP and passed to the incrementer logic to obtain the final output.

With this architecture, we obtain $N + K -1$ partial convolutions from $N \times K$ segments of intermediate results from a single multiplier to process signed input data. While for unsigned data processing, the \textit{Packing Decrementers} and \textit{Split Incrementers} can all be reduced, which in return occupies less logic resource.

\begin{figure}[ht]
    \centering
    \includegraphics[width=0.48\textwidth]{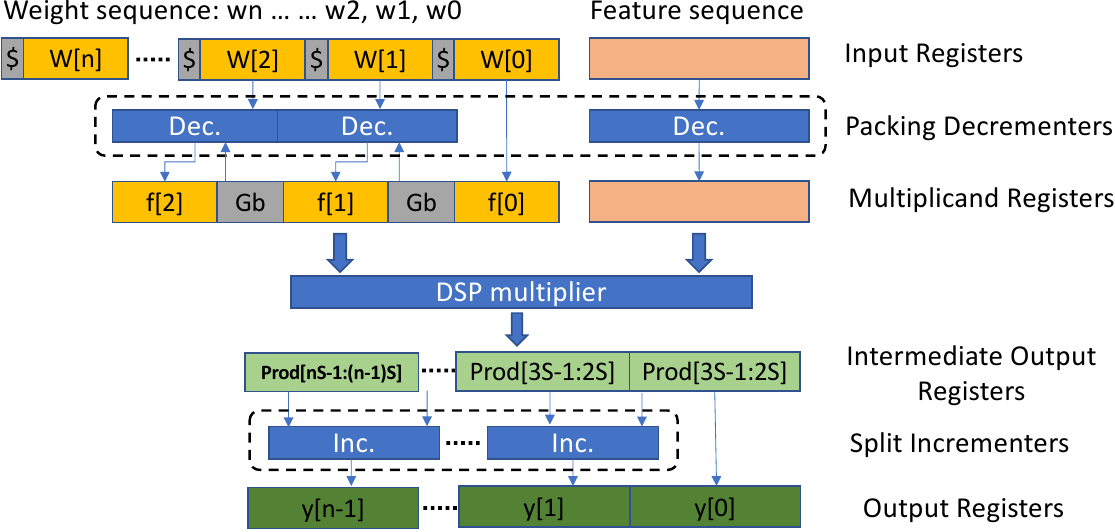}
    \caption{Micro-architecture of a single DSP convolver.}
    \label{fig:convmicroarch}
\end{figure}

\subsubsection{1-D \& DNN layer Convolution Architecture}\label{sec:1d2dconv}
The HiKonv 1-D convolution on FPGA is based on Equation~\ref{eq:1dnx}. Different from the single DSP convolver, the guard bits here need to be adjusted to $G_b = \lceil{log_2K}\rceil$ so that the partial additions do not overflow. 
After the adjustment of the guard bits, a single convolver is recursively used to process the input and weight sequences. It requires an additional register to temporarily store the output result of current multiplication to continuously construct new convolution outputs by shifting the stored data and adding it with the new output from the DSP multiplier, as shown in Figure~\ref{fig:1dconv}.

\begin{figure}[]
    \centering
    \includegraphics[width=0.4\textwidth]{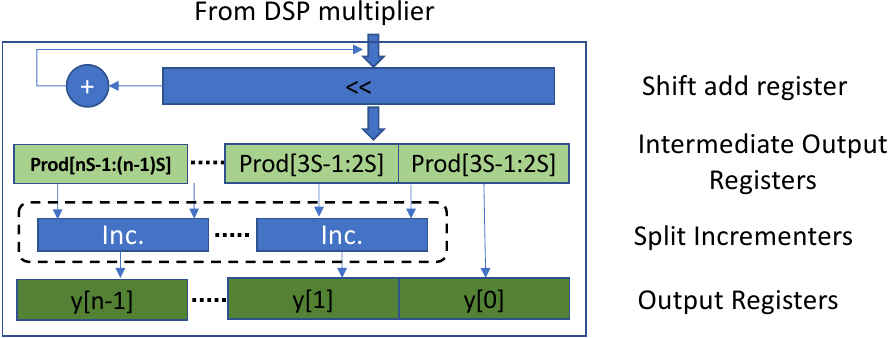}
    \caption{FPGA HiKonv 1-D convolution output processing.}
    \label{fig:1dconv}
\end{figure}

However, for a convolutional layer in a DNN, a single output involves data from multiple input features even after the input channels are tiled; based on the Equation~\ref{eq:1dfor3d}, we first change the guard bits to $G_b=\lceil{log_2(M\cdot min(K,N))}\rceil$, so the $S$-bit slice is enough to hold the accumulation results from M feature maps. This also enables us to perform the addition operation between the different input channels before \textit{Split Incrementers} to further reduce the number of 1-bit incrementers to save resources, as shown in Figure~\ref{fig:2dconv}.

\begin{figure}[t]
    \centering
    \includegraphics[width=0.48\textwidth]{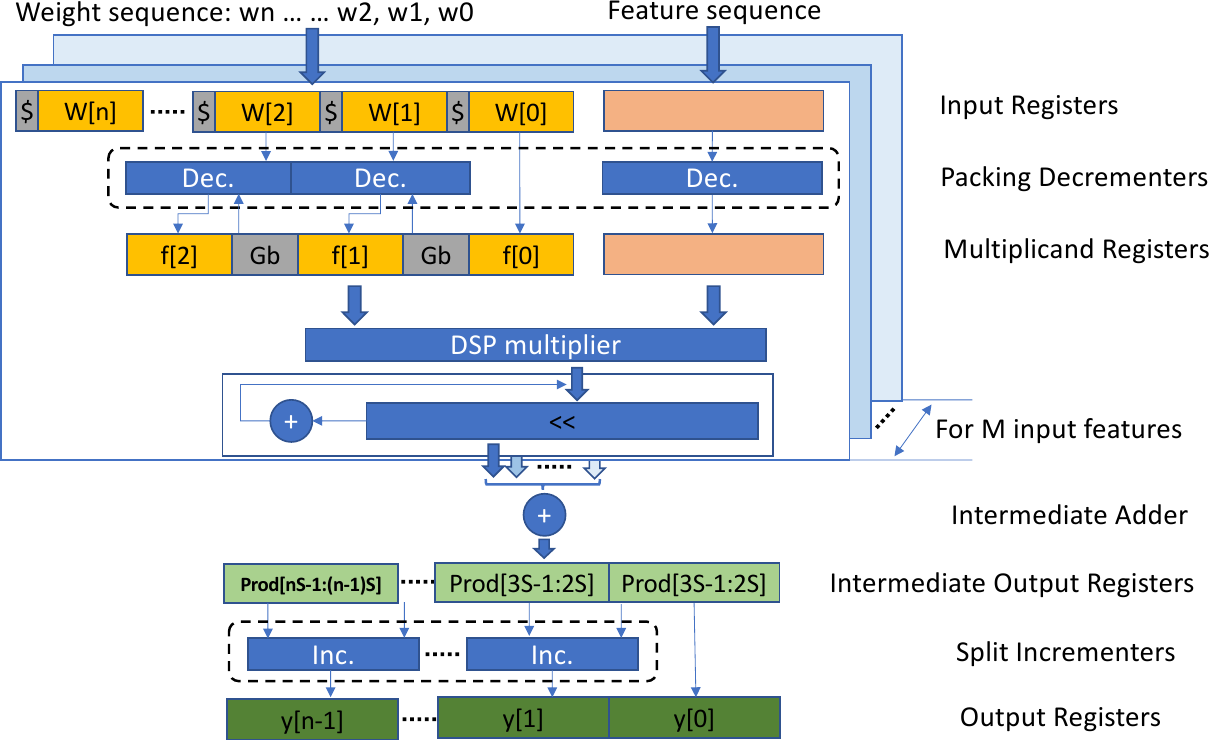}
    \caption{FPGA HiKonv 2-D convolution.}
    \label{fig:2dconv}
\end{figure}
\section{Evaluations}\label{sec:eva}

HiKonv is a general technique that can be adopted for both the GPP and reconfigurable hardware platforms. We demonstrate its efficacy on both platforms with the proposed implementations in Section~\ref{sec:impl}.

\subsection{HiKonv on GPPs}
To demonstrate the generality of our proposed HiKonv solution,
we test the implementations for GPP on the Intel desktop CPU platform, Intel mobile CPU platform, and ARM platforms.
Specifically, the implementations are evaluated on the Intel Core i7-10700K CPU, the i7-10710U CPU, and the Raspberry PI 3B+ platform with an ARM Cortex A53 processor, respectively.

\subsubsection{Convolution layer evaluation}

\begin{figure}
        \begin{subfigure}{0.45\textwidth}
        \centering
        \includegraphics[width=\linewidth]{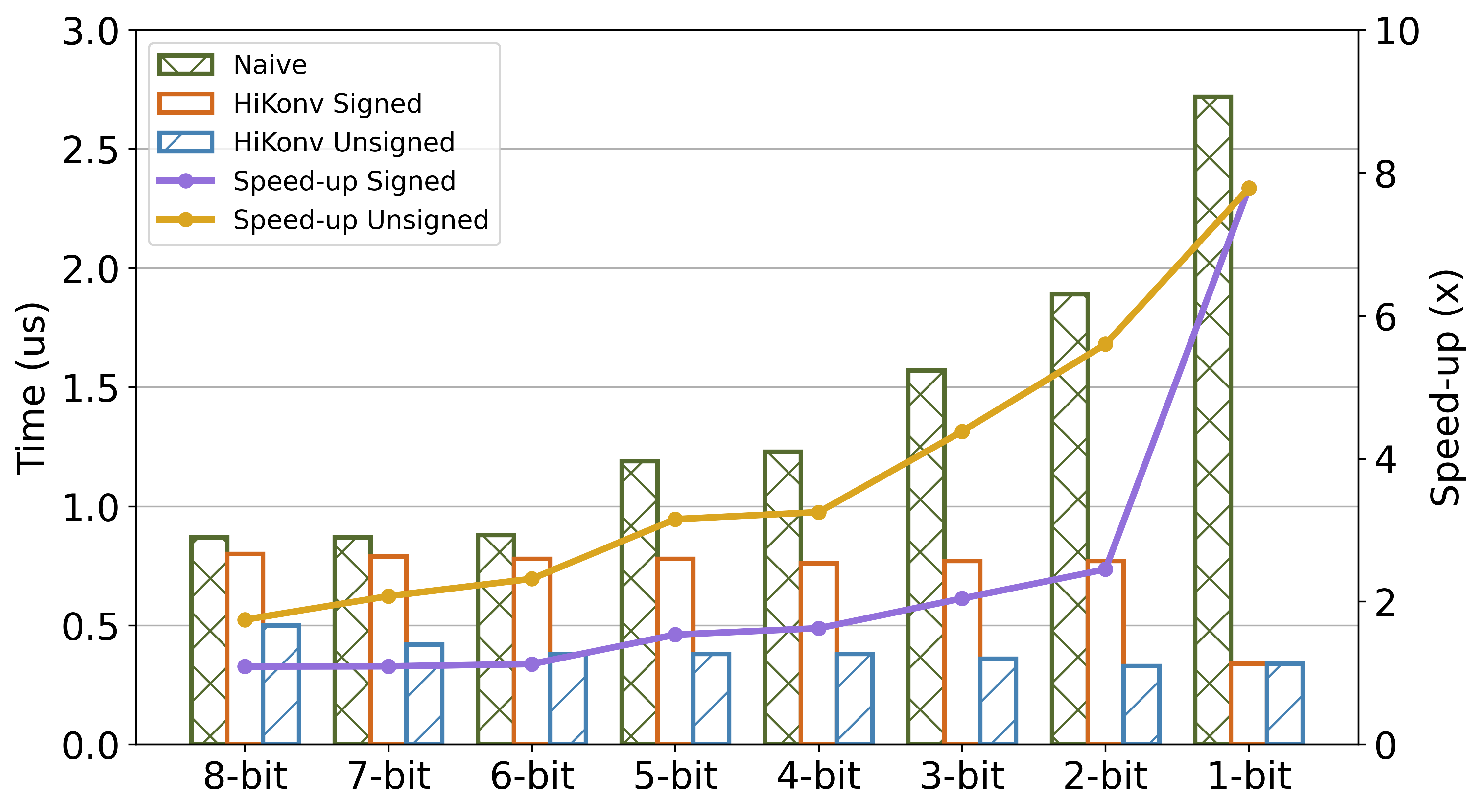}
        \caption{Speedup on X86\_64.}
        \label{fig:all_bits}
    \end{subfigure}
    \begin{subfigure}{0.45\textwidth}
        \centering
        \includegraphics[width=\linewidth]{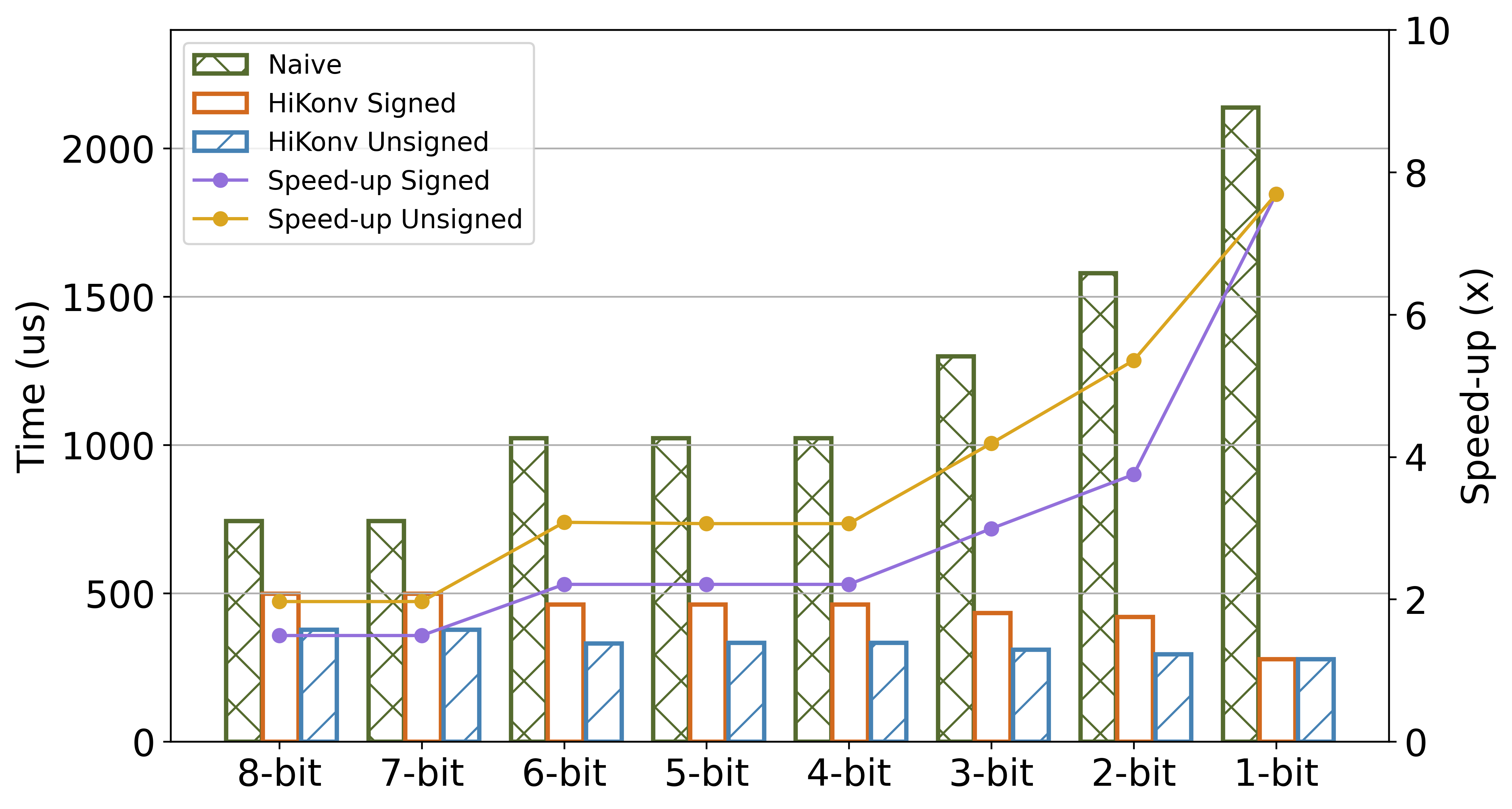}
        \caption{Speedup on ARM.}
        \label{fig:allbitarm}
    \end{subfigure}
    \caption{Speedup for different bitwidth.}
    \vspace{-4mm}
    \label{fig:arm_eval}
\end{figure}

We first measure the performance of the 1-D convolution with low-precision bitwidths from 1 to 8 bits. For quantified analyses, we randomly generate feature and kernel vectors.
Since modern CPUs are equipped with 32-bit multipliers, without loss of generality, we use $A=B=32$ bits as the multiplication bitwidth.
Assuming $p = q$, we calculate the corresponding $N$, $K$, and $G_b$ and pack the quantized values into 32 bits accordingly. Figure~\ref{fig:all_bits} and Figure~\ref{fig:allbitarm} show the result of the 1-D convolution for both X86\_64 CPUs and ARM processor.
It is clear that when the bitwidth of the processed data is reduced, the performance of our HiKonv solution increases because of the increased number of operations. When the bitwidth is 1, the HiKonv solution provides a $7.8\times$ performance improvement on the X86\_64 platform for both signed and unsigned data inputs.
While the improvement goes down to $1.2\times$ and $1.8\times$ when the bitwidth is increased to 8 bits.
The same performance improvements are maintained on the ARM processor, which are $7.6\times$ for 1-bit inputs for both signed and unsigned data inputs and $1.4\times$ and $1.9\times$ when the inputs are 8 bits. These performance improvements demonstrate the generality of our HiKonv solution.

We then evaluate the performance of the 1-D convolution together with the 2-D DNN layer. For the 2-D layer, we pick the final layer of UltraNet~\cite{UltraNet2020}, which is the champion model for the DAC-SDC contest 2020, and randomly generate feature and kernel sequences.
Same as the previous experiment, we use $A=B=32$ bits as the multiplication bitwidth and pack $p=q=4$-bit unsigned values in each of the operands. According to Theorem~\ref{lemma:h_stack}, we obtain $N = 3, K = 3, G_b = 2$ and $S = 10$ bits. Figures \ref{fig:cpu_eval_1d} and \ref{fig:cpu_eval_2d} show the 1-D and 2-D convolution latency results on the Intel platforms, respectively. Both are compared to the baseline implementation with nested loops without our HiKonv solution, noted as ``Naive" in the figure. 
The implementations to support signed and unsigned values are slightly different from each other, and we evaluate both of them, noted as Signed and Unsigned in the figure.
CPU hardware lacks bitwise management capability, and dealing with signed values can cause overhead from intricate bit operations.
HiKonv-enabled convolution implementations with signed integer support perform at least $2.26\times$ faster than the naive ones and reach $3.21\times$ faster for only unsigned support. 
For 2-D convolutions, HiKonv performs $2.74\times$ for signed data input and achieves $3.19\times$ speed-up for unsigned implementation compared to naive implementation.
The implementations that only support unsigned integer input are always faster because there is no sign bit checking during output data split.

The same performance improvement has been observed on the ARM processor, as shown in Figure~\ref{fig:arm_eval_1d} and Figure~\ref{fig:arm_eval_2d}. Compared to naive implementation on an ARM processor, the speedups for signed and unsigned data input reach $2.21\times$ and $3.06\times$ for 1-D convolution and $2.75\times$ and $2.98\times$ for 2-D convolution. Although all results are magnitude slower than the implementations on Intel X86\_64 CPUs, the speedups are only slightly reduced due to the constrained cache size and memory access capability on the ARM processors.

\begin{figure}
\centering
    \begin{subfigure}{0.22\textwidth}
        \centering
        \includegraphics[width=\linewidth]{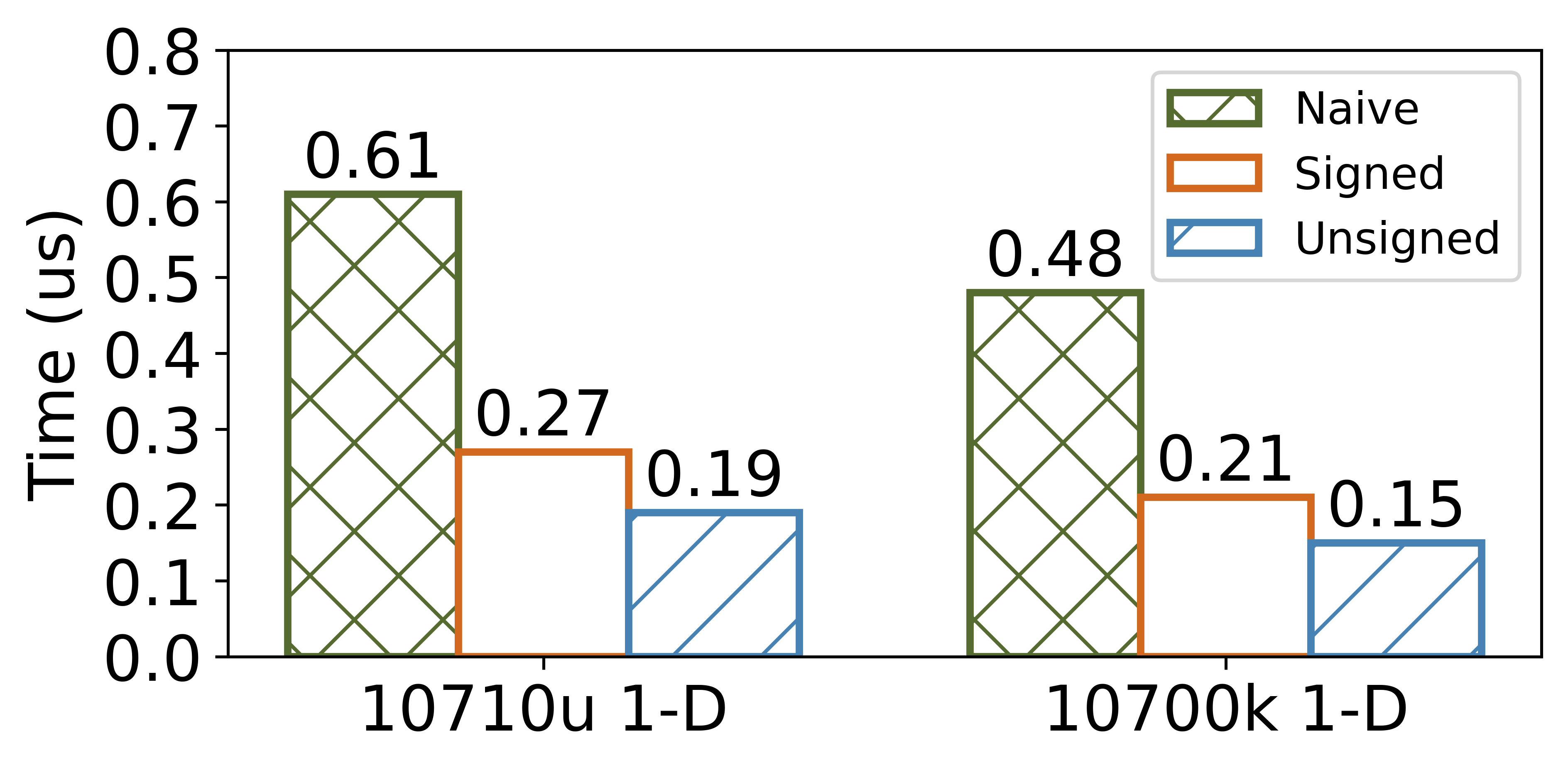}
        \caption{1-D Convolution.}
        \label{fig:cpu_eval_1d}
    \end{subfigure}
    \hfill
    \begin{subfigure}{0.22\textwidth}
        \centering
        \includegraphics[width=\linewidth]{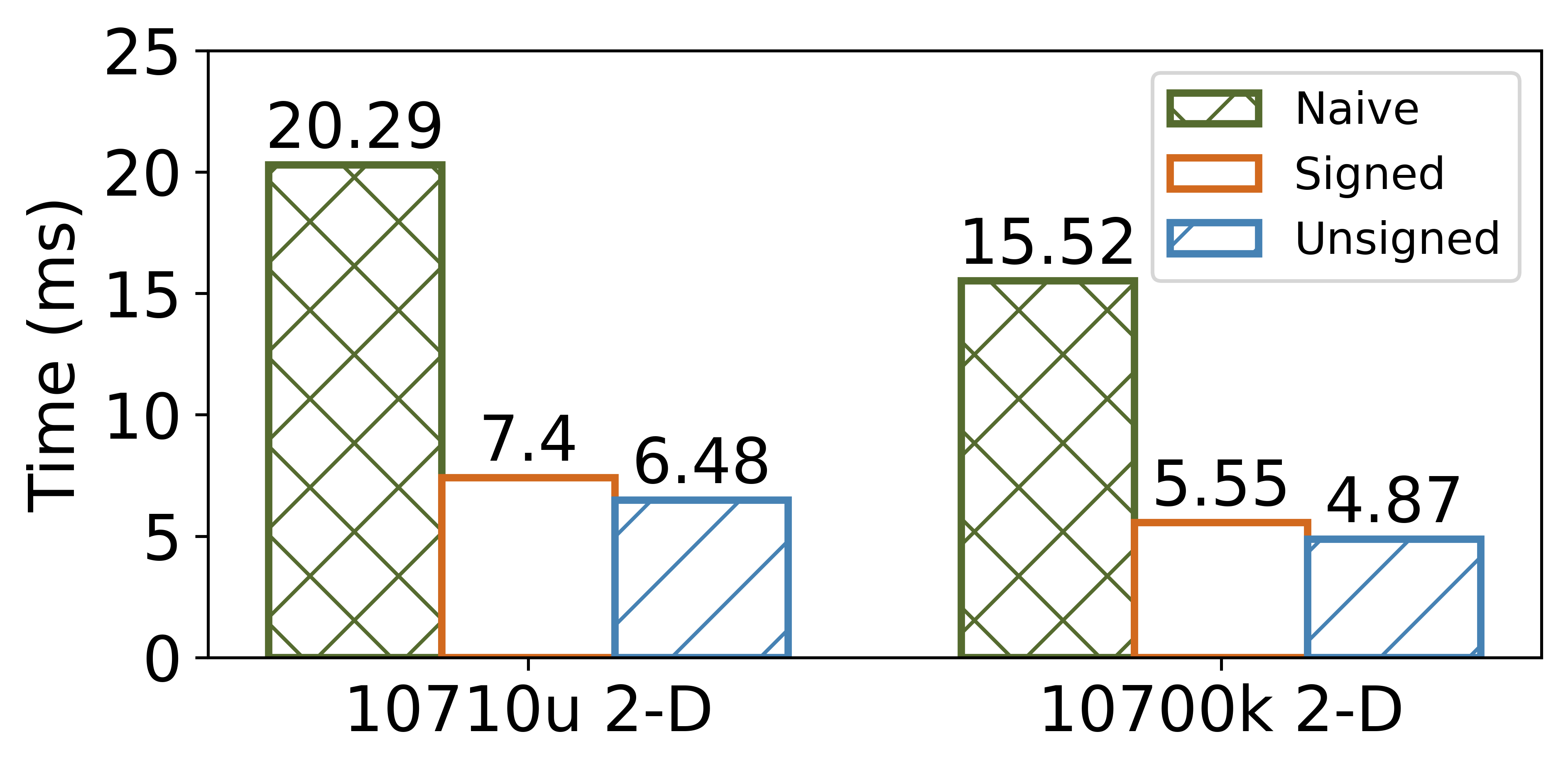}
        \caption{2-D Convolution.}
        \label{fig:cpu_eval_2d}
    \end{subfigure}
    \label{fig:intel4bit}
    \caption{Evaluation with 4-bit layers on X86\_64 CPUs.}
    \vspace{-4mm}
\end{figure}

\begin{figure}
\centering
    \begin{subfigure}{0.2\textwidth}
        \centering
        \includegraphics[width=\linewidth]{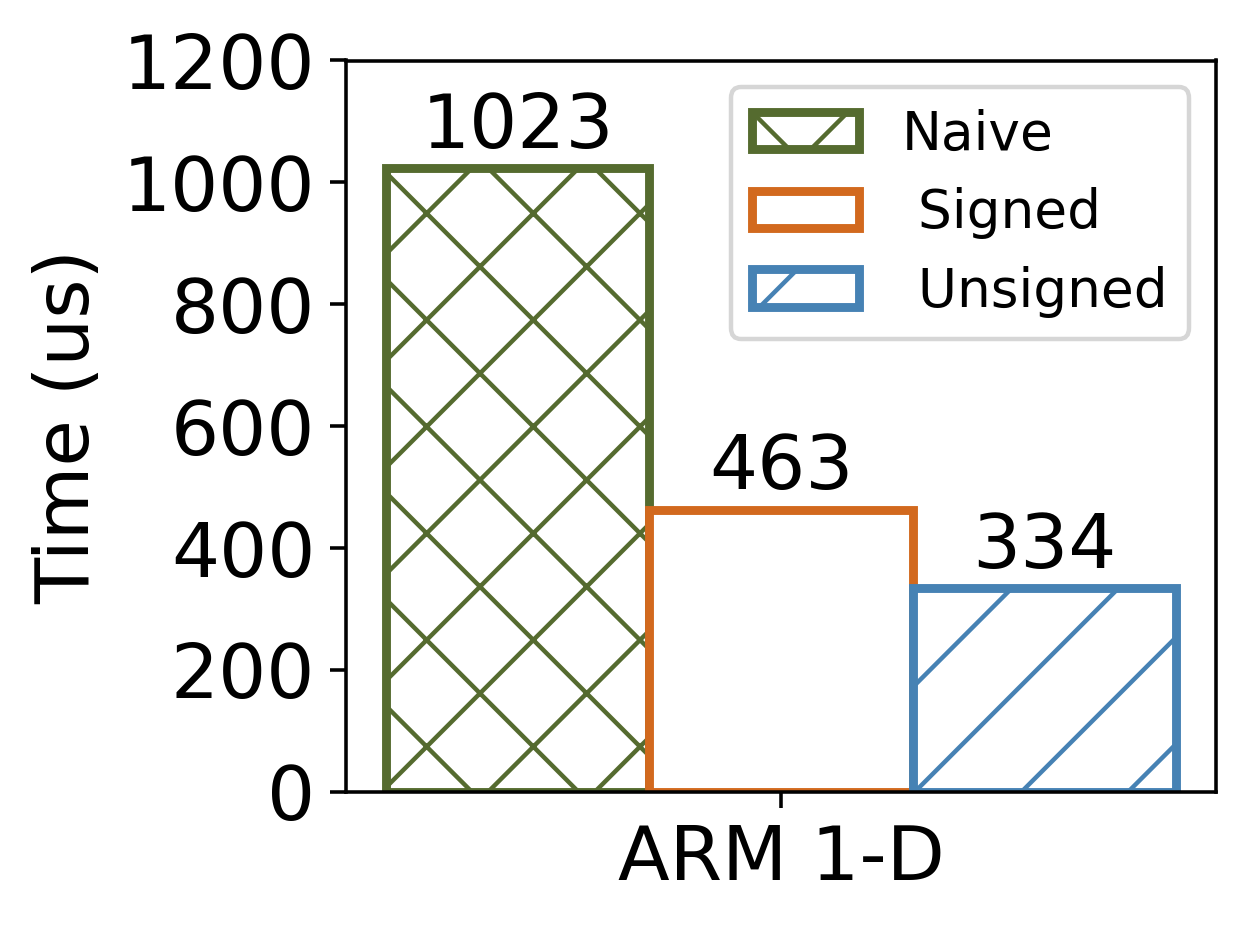}
        \caption{1-D Convolution.}
        \label{fig:arm_eval_1d}
    \end{subfigure}\hspace{1em}%
    \begin{subfigure}{0.2\textwidth}
        \centering
        \includegraphics[width=\linewidth]{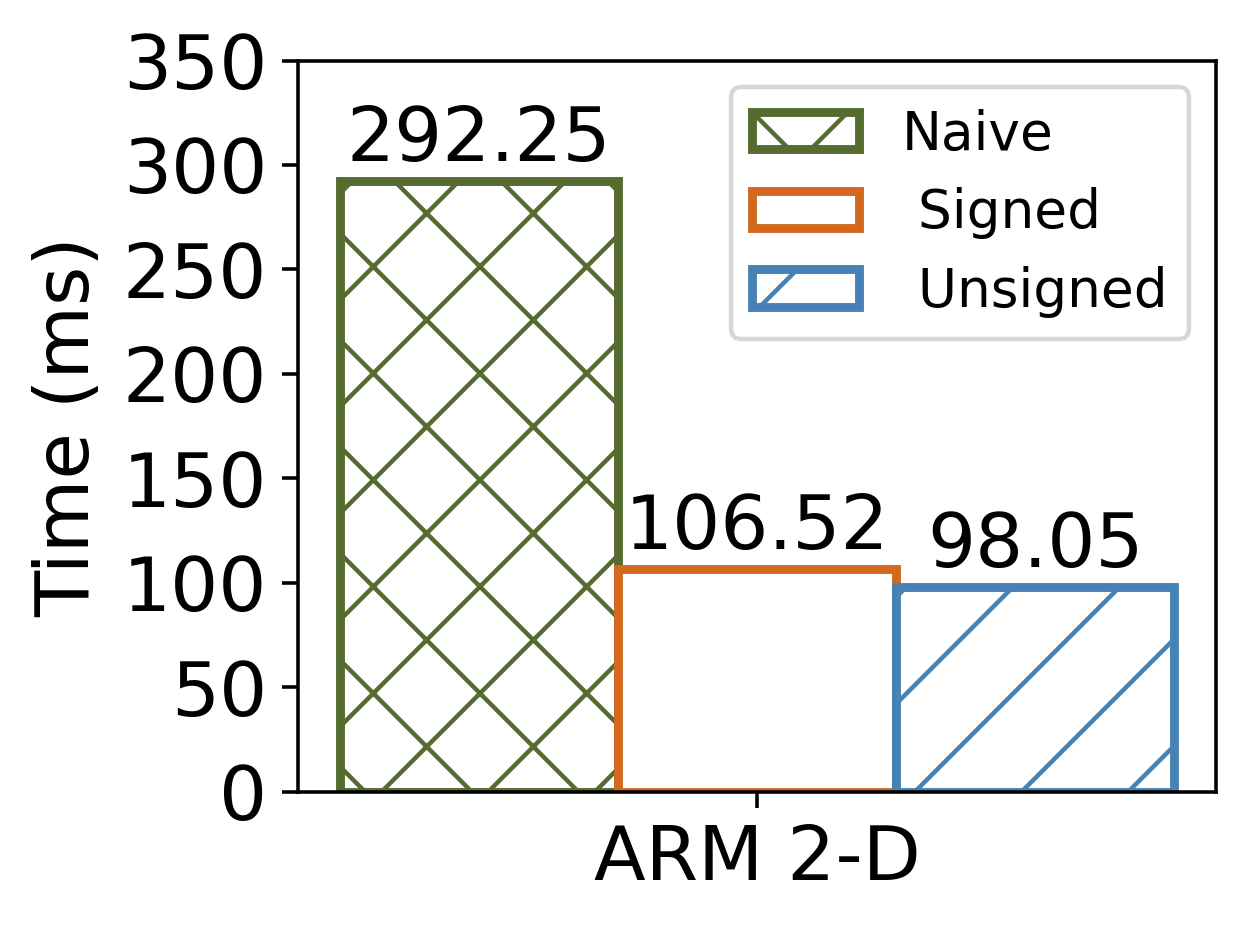}
        \caption{2-D Convolution.}
        \label{fig:arm_eval_2d}
    \end{subfigure}
    \caption{Evaluation with 4-bit layers on ARM.}
    \vspace{-4mm}
    \label{fig:cpu_eval}
\end{figure}

\subsubsection{Complete model evaluation}\label{sec:fullmodelcpu}

\begin{figure}
        \begin{subfigure}{0.45\textwidth}
        \centering
        \includegraphics[width=\linewidth]{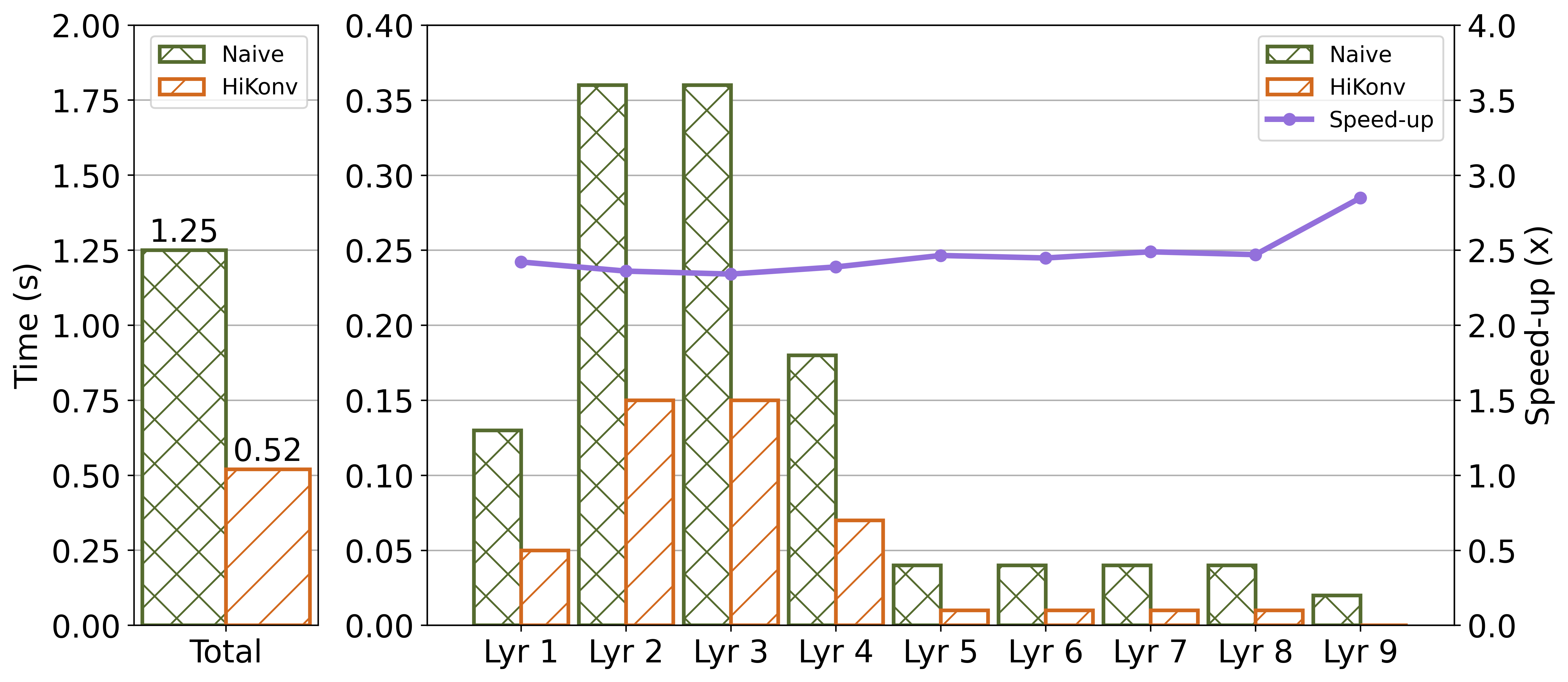}
        \caption{Ultranet on X86\_64 CPU.}
        \label{fig:ultranet_cpu}
    \end{subfigure}\hspace{1em}%
    \begin{subfigure}{0.45\textwidth}
        \centering
        \includegraphics[width=\linewidth]{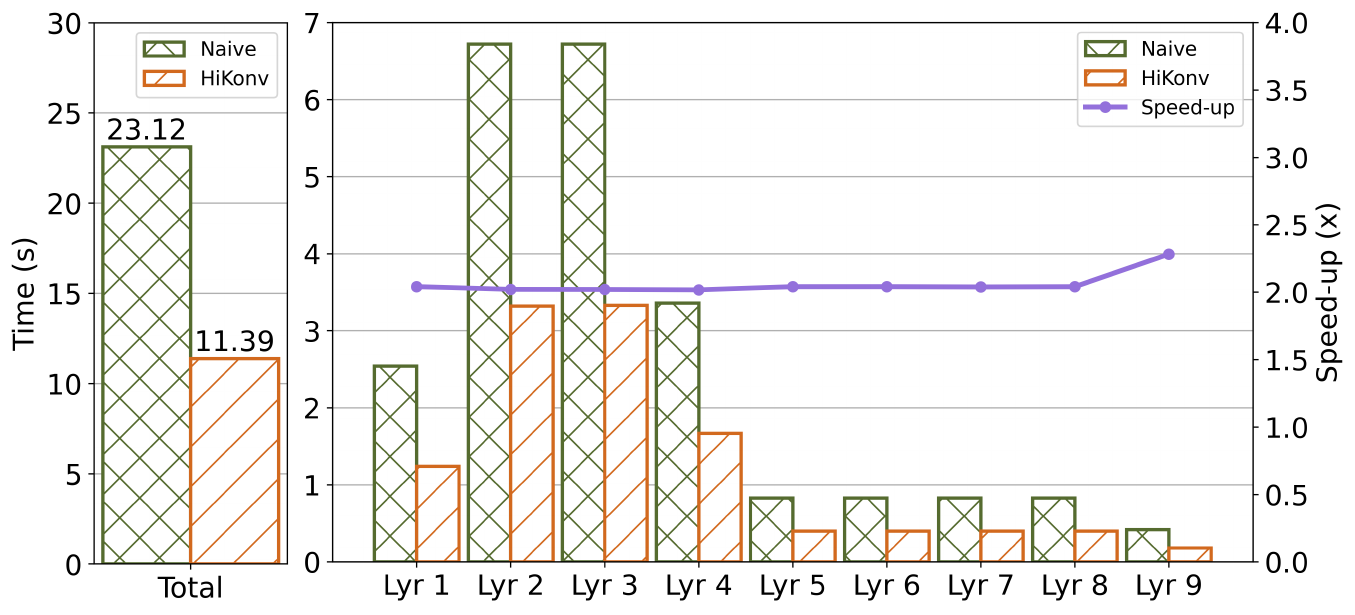}
        \caption{Ultranet on ARM processor.}
        \label{fig:ultranet_arm}
    \end{subfigure}
    \caption{4-bit Ultranet full model evaluation.}
    \vspace{-4mm}
        \label{fig:ultranet_fullmodel}
\end{figure}

To demonstrate the performance of HiKonv in the DNN context, we implement the entire UltraNet with convolution layers based on the HiKonv solution to evaluate on GPPs. In this experiment, we also perform the calculations in integers for both naive and HiKonv implementations to be consistent with the previous experiments and ensure a fair comparison.

The latency results of each layer and the entire model for a 4-bit Ultranet model on the X86\_64 CPU and ARM processor are graphed in Figure~\ref{fig:ultranet_fullmodel}. The speedup gained from HiKonv optimization is consistently over $2\times$, $2.4\times$ for the Intel CPU, and $2.03\times$ for the ARM processor. Compared to the naive implementations, HiKonv convolutions are approximately three times faster, these layer evaluations also include the overhead of packing the kernel on-the-fly for HiKonv implementations and heavier memory access due to a large amount of intermediate feature data. Moreover, one "layer" in the figure includes convolution layers, activation layers, and pooling layers. HiKonv only accelerates convolutions operations; thus, activation and pooling layers are not optimized by HiKonv.
The layer-wise performance on the ARM processor is worse than the same layers on the X86\_64 CPU due to the smaller cache size and lower performance of the ALU in ARM.
Nevertheless, the overall result still demonstrates the effectiveness of HiKonv in a convolutional neural network. Although the other operations that are not optimized by HiKonv somewhat lessen the speedup compared to theoretical values, the benefit of HiKonv are evidently significant.

\subsection{HiKonv on Reconfigurable Hardware}\label{eva:fpga}
We choose the Xilinx Ultra96 MPSoC platform to conduct the evaluation of our implementation. It is equipped with 360 DSP48E2 units and a quad-core ARM processor.
We adopt the High Level Synthesis (HLS) design method for architecture design with C++ as the input language.

\begin{table*}[]
\small
\centering
\caption{Resource consumption of unit convolver.}
\vspace{-2mm}
\resizebox{1.6\columnwidth}{!}{
\begin{tabular}{c|ccc|ccc|ccc}
\hline \hline
      & \multicolumn{3}{c|}{P6Q6}                                              & \multicolumn{3}{c|}{P4Q4}                                              & \multicolumn{3}{c}{P2Q2}                                              \\ \hline
      & \multicolumn{1}{c|}{HiKonv} & \multicolumn{1}{c|}{Conv-DSP} & Conv-LUT & \multicolumn{1}{c|}{HiKonv} & \multicolumn{1}{c|}{Conv-DSP} & Conv-LUT & \multicolumn{1}{c|}{HiKonv} & \multicolumn{1}{c|}{Conv-DSP} & Conv-LUT \\ \hline
DSP   & \multicolumn{1}{c|}{1}      & \multicolumn{1}{c|}{2}        & 0        & \multicolumn{1}{c|}{1}      & \multicolumn{1}{c|}{4}        & 0        & \multicolumn{1}{c|}{1}      & \multicolumn{1}{c|}{10}       & 0        \\ \hline
LUT   & \multicolumn{1}{c|}{76}     & \multicolumn{1}{c|}{51}       & 141      & \multicolumn{1}{c|}{133}    & \multicolumn{1}{c|}{189}      & 328      & \multicolumn{1}{c|}{182}    & \multicolumn{1}{c|}{332}      & 540      \\ \hline
Cycle & \multicolumn{1}{c|}{2}      & \multicolumn{1}{c|}{2}        & 2        & \multicolumn{1}{c|}{2}      & \multicolumn{1}{c|}{3}        & 3        & \multicolumn{1}{c|}{2}      & \multicolumn{1}{c|}{6}        & 6        \\ \hline \hline
\end{tabular}}
\label{tab:singleconvo}
\end{table*}

\subsubsection{Single DSP convolver}

We first evaluate the convolution with a single DSP from the FPGA. We configure the single DSP convolver to process kernel and feature with HiKonv in different bitwidths, noted as HiKonv convolver. We also configure a ``two for-loop" based module to provide the same convolution output as our HiKonv convolver, and apply loop unrolling to guarantee the design is optimized. The design configured to use DSP resources but without adopting the HiKonv solution is noted as Conv-DSP. We also configure the convolver without adopting HiKonv and DSP resources, noted as Conv-LUT. The DSP and LUT resource utilization data are shown in Table~\ref{tab:singleconvo} together with the processing latency of the generated hardware module counted in clock cycles.

Clearly, compared to the conventional convolvers with DSPs, the HiKonv solution uses a single DSP but performs the same amount of computation, whereas the conventional convolvers cost multiple DSPs.
HiKonv convolver occupies slightly more LUTs than conventional convolver when customized for 6-bit inputs because the \textit{Packing Decrementer} and \textit{Split Incrementer} consume more LUTs; however, when the bitwidth is reduced, the HiKonv convolvers cost even less LUTs than the conventional convolvers.
HiKonv-based convolvers always provide shorter processing latency since the multiplication and addition operations all happen in the single multiplication without the loop for accumulation as in normal convolution. The comparison with Conv-LUT further shows the reduced LUT resource with the effective use of DSP. The saving of LUT becomes more significant when the input bitwidth is smaller.

\subsubsection{Comparison to Binary convolution layer}

We then evaluate the extreme case of quantized convolution, which is binary neural networks (BNN).
A convolutional layer in a BNN takes the binary inputs for both feature maps and kernel weights, processes the convolution between them, and generates the outputs. Note that the outputs may not be binary due to channel-wise accumulation.
We first implement a binary convolution layer with 4-bit outputs without using the DSP resources, denoted as BNN-LUT; we then configure a binary computation module with our HiKonv solution, denoted as BNN-HiKonv. In comparison, we evaluate the resource utilization of these two designs under the same concurrency and the same clock frequency setting, as shown in Table~\ref{tab:bnnEval}.

\begin{table}[h]
\small
    \centering
    \caption{Comparison of Resource util. to binary convolution}
    \vspace{-2mm}
    \resizebox{\columnwidth}{!}{\begin{tabular}{c|c|c|c|c|c|c}
    \hline\hline
    \multicolumn{2}{c|}{\# of Concurrent MACs}  & 336 & 576 & 960 & 1536 & 3072\\\hline
     \multirow{1}{*}{BNN-LUT}& LUT  & 3371 & 4987 & 7764 & 12078 & 23607 \\\hline
    \multirow{3}{*}{BNN-HiKonv} & LUT  & 2672 & 2536 & 3369 & 3587 &  9319\\ \cline{2-7}
                                & DSP   & 16   & 32   & 64   & 128  &256 \\ \cline{2-7}
                                & DSP Thro.&21 &18 &15 &12 & 12 \\ \cline{2-7}
                                & LUT/DSP & 43.7 & 76.6&68.7 & 65.4& 55.8\\ \cline{2-7}
    \hline\hline
    \end{tabular}}
    \label{tab:bnnEval}
\end{table}

Clearly, compared to BNN-LUT, the LUT usage of BNN-HiKonv is reduced. 
However, the throughput for each DSP reduces when the concurrency increases because there is more vertical stacking, and it takes more guard bits when the concurrency increases.
The equivalent number of LUTs replaced by one DSP (LUT/DSP) varies from 43.7 to 76.6 due to the accumulation logic in the convolution operation. 
HiKonv creates opportunities to leverage DSPs for high-throughput BNN (or other low-bitwidth models) convolution computations that would help map a larger BNN with high concurrency into the same FPGA. It can also potentially increase the design's clock frequency since DSPs can run at a higher frequency than LUTs.

\subsubsection{Complete model evaluation}

We apply our HiKonv solution to the entire UltraNet model~\cite{UltraNet2020} on the Xilinx Ultra96 MPSoC FPGA.
The HiKonv convolution is implemented in the programmable logic as an accelerator for convolutions.
The weight and activation of this model are quantized to 4-bit. We execute all the convolution layers on the programmable logic and the other layers on the ARM processor on the FPGA platform.
We follow the same layer architecture and system architecture as the original UltraNet design and only change the computation for convolution with our HiKonv solution.
Besides using DSPs, we also use small adders and shifters constructed by LUTs, taking advantage of the flexible configuration features of the FPGA.

In addition to resource utilization, we also measure the throughput in frame-per-second (fps) and calculate the DSP efficiency in terms of Giga-operations-per-second-per-DSP (Gops/DSP) for comparison as shown in Table~\ref{tab:UltraNetimpl}.
All the testing data is first loaded into the DDR to leverage the full capacity of the accelerators in our evaluations.

\begin{table}[h]
\small
\centering
\caption{UltraNet resource and performance.}
\vspace{-2mm}
\resizebox{\columnwidth}{!}{\begin{tabular}{l|c|c|c|c}
\hline\hline
             & LUT & DSP & fps & DSP Eff. (Gops/DSP) \\ \hline
UltraNet     & 4.3k& 360 &  248 & 0.289 \\ \hline
UltraNet-HiKonv & 4.8k &  327&  401/588 & 0.514/0.753 \\ \hline\hline
\end{tabular}}
\label{tab:UltraNetimpl}
\vspace{-2mm}
\end{table}

UltraNet-HiKonv uses more LUT resources than the original implementation due to the shifting and adding logic; however, it reduces the DSP utilization thanks to the dramatic improvements in the efficiency and the throughput of the DSPs.
The original implementation of the UltraNet uses one DSP for two 4-bit MACs that are natively supported by the synthesis tool. It only achieves 248 fps with a 0.289 Gops/DSP efficiency. 
With our HiKonv solution, the onboard implementation of UltraNet achieves 401 fps with a 0.514 Gops/DSP DSP efficiency. This significant improvement is achieved under the constraint that the software execution on the ARM core is not fast enough to feed the input data to the FPGA accelerator to process, even with our best software optimization of multi-threading and data buffering. If this ARM core bottleneck is removed, the UltraNet-HiKonv accelerator can reach an even higher performance of 588 fps with the DSP efficiency of 0.753 Gops/DSP.
\vspace{-2mm}
\section{Conclusion and discussion}\label{sec:conclusion}

In this paper, we present HiKonv, a general technique with theoretical guarantees for using a single multiplier unit to process multiple low-bitwidth convolution operations in parallel for significantly higher computation throughput with flexible bitwidths.
It is able to support convolutions in DNNs and achieves the highest possible throughput for quantized convolution with novel bitwise management and computation. 
As a demonstration of its general applicability and benefits, we show that HiKonv has achieved $3.17\times$ throughput improvement on CPU and $2.37\times$ and $2.61\times$ throughput and DSP efficiency improvements for the DAC-SDC 2020 champion model on FPGA.
HiKonv suits both software and hardware optimizations and provides new opportunities for future hardware designs for efficient DNN processing.

\section*{Acknowledgments}
This work is supported in part by the IBM-Illinois Center for Cognitive Computing Systems Research (C3SR), the National Research Foundation, Prime Minister's Office, Singapore under its Campus for Research Excellence and Technological Enterprise (CREATE) programme, and Xilinx Adaptive Compute Cluster at University of Illinois at Urbana-Champaign.

\bibliographystyle{IEEEtran}
\bibliography{IEEEabrv, ref_full}

\begin{IEEEbiography}[{\includegraphics[width=1in,height=1.25in,clip]{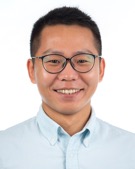}}]{Yao Chen} received the B.S. and Ph.D. degree from Nankai University, Tianjin, China in 2010 and 2016, respectively. He is currently a Senior Research Scientist in the Advanced Digital Sciences Center, Singapore, University of Illinois at Urbana-Champaign. He also serves as the Coordinator for Hardware and Data Analytics research groups. His current research interests include high-performance reconfigurable computing, high-level synthesis, machine learning and Electronic Design Automation.
\end{IEEEbiography}

\begin{IEEEbiography}[{\includegraphics[width=1in,height=1.25in,clip]{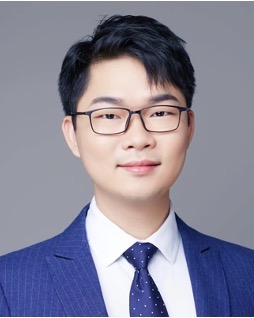}}]{Junhao Pan} received the B.S. degree and is a Ph. D. candidate with the ECE Department, University of Illinois at Urbana-Champaign. His research focuses on deep learning accelerator design, hardware-software co-design, and Internet-of-Things applications.
\end{IEEEbiography}

\begin{IEEEbiography}[{\includegraphics[width=1in,height=1.25in,clip]{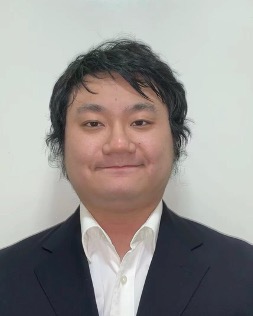}}]{Xinheng Liu}
receives his B.S. and Ph.D degree from University of Illinois Urbana-Champaign in 2015 and 2022 respectively. His research interests include high-level sysnthesis, high-performance computing architecture and domain-specific accelerator design for machine-learning applications.
\end{IEEEbiography}

\begin{IEEEbiography}[{\includegraphics[width=1in,height=1.25in,clip]{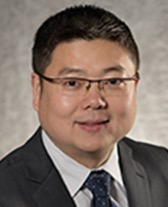}}]{Jinjun Xiong}
Jinjun Xiong received his B.S. and M.S. degrees from Tsinghua University, Beijing, China, in 1998 and 2000, respectively, M.S. degree from University of Wisconsin, Madison in 2002, and Ph.D. degree from University of California, Los Angeles, in 2006. He is currently Empire Innovation Professor with the Department of Computer Science and Engineering at University at Buffalo. Prior to that, he was a Senior Researcher and Program Director for AI and Hybrid Clouds Systems at the IBM Thomas J. Watson Research Center, Yorktown Heights, New York, USA. His current research interests are on across-stack AI systems research, which include AI applications, algorithms, tooling, and computer architectures. Many of his research results have been adopted in industrial products and tools. His publication won eight Best Paper Awards and eight Nominations for Best Paper Awards from international conferences.
\end{IEEEbiography}

\begin{IEEEbiography}[{\includegraphics[width=1in,height=1.25in,clip]{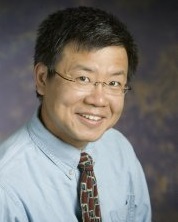}}]{Deming Chen}
received the B.S. degree in computer science from the University of
Pittsburgh, PA, USA, in 1995, and the M.S. and Ph.D. degrees in computer science
from the University of California at Los Angeles, in 2001 and 2005, respectively. He
is the Abel Bliss Endowed Professor 
with the ECE Department, University of Illinois at Urbana–Champaign. Dr. Chen is an IEEE Fellow, an ACM Distinguished Speaker, and the Editor-in-Chief of ACM Transactions on Reconfigurable Technology and Systems (TRETS).
His current research interests include system-level and high-level synthesis, machine learning, GPU and reconfigurable computing, computational genomics, and hardware security.
\end{IEEEbiography}


\end{document}